\newtheorem{theorem}{Theorem}[subsection]
\newtheorem{proposition}[theorem]{Proposition}
\newtheorem{lemma}[theorem]{Lemma}
\newtheorem{corollary}[theorem]{Corollary}
\theoremstyle{definition}
\theoremstyle{remark}
\newtheorem{remark}[theorem]{Remark}
\numberwithin{equation}{subsection}
\author{Joseph Ross \thanks{Splunk Inc., 3098 Olsen Dr, San Jose, CA 95128; josephr@splunk.com, joseph.ross@gmail.com}}
\title{Asymmetric scale functions for $t$-digests}
\date{}
\begin{document}
\maketitle

\begin{abstract}
The $t$-digest is a data structure that can be queried for approximate quantiles, with greater accuracy near the minimum and maximum of the distribution.
We develop a $t$-digest variant with accuracy asymmetric about the median,
thereby making possible
alternative tradeoffs between computational resources and accuracy which may be of particular
interest for distributions with significant skew.
After establishing some theoretical properties of scale functions for $t$-digests, we show that a tangent line construction on the familiar scale functions preserves the crucial properties that allow $t$-digests to operate online and be mergeable. We conclude with an empirical study demonstrating the asymmetric variant preserves accuracy on one side of the distribution with a much smaller memory footprint.
\end{abstract}
\section{Introduction}

Recently the $t$-digest \citep{dunning2019computing} has gained prominence as an efficient data structure for online estimation of quantiles of large data streams. 
The digest consists of a collection of weighted centroids on the real line, with the weight representing cluster size (the number of observations near the corresponding centroid).
In comparison to other methods, the $t$-digest is notable for its ability to have variable accuracy in different regions of quantile space. The accuracy is controlled by a scale function, which governs the permissible compression (expressed as a bound on cluster size, see \citep{dunning2019size}) as a function of the quantile $q$. Using a linear scale function turns the $t$-digest into a dynamic version of a histogram with equal-sized bins, but using logarithmic or (inverse) trigonometric functions allows the digest to achieve greater accuracy near the tails (i.e., $q$ near $0$ or $1$) and comparatively less accuracy near the median ($q = {1 \over 2}$).

The capacity of the $t$-digest to operate online imposes a requirement on the scale function, namely that a collection of centroids compatible with a given scale function remains so when new samples are inserted. Since forming the ordered union of two digests may be described as a sequence of insertions from the viewpoint of either digest, meeting this requirement also implies $t$-digests can be merged to form a new one that inherits the accuracy bounds of its constituents and thus large datasets may $t$-digested in parallel \citep[\S 2.5]{dunning2019computing}. For the well-known scale functions, preservation of the constraint under insertion is proved in \citep{dunning2019conservation}.

\textbf{Motivation.}
All of the scale functions the author is aware of are symmetric about $q = {1 \over 2}$, and thus expend similar computational resources on those parts of the distribution near $q=0$ and those near $q=1$. 
In practice there are scenarios in which one tail of the distribution carries considerably more excitement than the other. For example, in application performance monitoring, the latency of individual operations or execution paths is often distributed with significant positive skew, as the overwhelming majority of executions complete quickly and uneventfully, while a relatively small number of outlying executions exhibits greater variation.
In practical terms, the difference between a 97th percentile and a 99th percentile operation execution is greater than the difference between a 3rd percentile and a 1st percentile execution, and so accuracy near $q=1$ is ``worth more" than accuracy near $q=0$.
In \cite[esp.~\S 3.2]{sampler} we have described a tail-based sampling method for distributed traces that requires a compact device for approximating quantiles and ranks, and in this setting we would like to make fine-grained distinctions near $q=1$, whereas very little of our budget will be devoted to keeping execution traces near $q=0$ in any case.

A related context is monitoring service level objectives in distributed computing environments \citep[Ch.~4]{beyer2016site}, \citep{sloss2017calculus}: it is common to treat upper quantiles of request latency as service level indicators \citep[Ch.~4]{beyer2016site}, which may be implemented as a client querying a $t$-digest for a particular quantile value near $q=1$ (but not near $q=0$). While it may not be possible to enhance the resolution of a $t$-digest exactly in a neighborhood of a specified quantile (since insertions may shift a region of data for which only a coarse summary is available into a region in which greater accuracy is required), an asymmetric scale function allows one to strike a better balance between computational resources and accuracy (e.g., save computational resources without compromising the accuracy of the required estimate, or increase accuracy for the required estimate by using an asymmetric scale function with a larger $\delta$ parameter). Especially for high-volume endpoints over longer time windows, the asymmetric $t$-digests we propose here are a natural family of data structures on which to base approximate calculations.

\textbf{Contributions.} In this paper, we prove (Subsection \ref{piecewise}) that a simple modification of the common scale functions continues to enjoy the preservation of the constraint under insertion property. The construction uses a piecewise definition in which we keep the scale function for $q \in (p, 1]$ and use the best linear approximation of the scale function at $p$ (i.e., the function whose graph is the tangent line to the graph of the scale function at $p$) for $q \in [0, p]$. Our approach is motivated by some brief theory (Section \ref{theory}), from which we conclude that decent scale functions must be differentiable, and from which we deduce an explicit criterion for verifying the decency of a candidate scale function. As a consequence we analyze the case of polynomial scale functions (Subsection \ref{polynomials}). We conclude with some empirical results in Section \ref{results}.

\textbf{Previous work.} For some background on other methods for computing quantiles in an online fashion, we refer the reader to \citep[1.1]{dunning2019computing}, in particular the Q-digest of \citep{shrivastava2004medians}, and the works of 
\citep{munro1980selection}, \citep{chen2000incremental}, and \citep{greenwald2001space}.
The moment-based quantile sketch has recently emerged as another compact data structure for quantile estimation \citep{gan2018moment}.

\section{Generalities} \label{theory}

\subsection{Definitions}

An ordered set of clusters $\mathcal{C} := \{ {C}_1, \ldots , {C}_n \}$ on a set of points in $\mathbb{R}$ is called a $t$-digest with respect to a scale function $k : [0, 1] \to \mathbb{R}$ if every cluster has unit weight or satisfies $k(q_{right}) - k(q_{left}) \leq 1$ \citep[\S 2.1]{dunning2019computing}.
The quantity $k(q_{right}) - k(q_{left})$ is called the $k$-size of the cluster.
We will always require $k$ to be non-decreasing and piecewise differentiable.

We will be interested in the operation of inserting a collection of samples $\Delta$ into a given set $\cal{C}$; denote the result by ${\cal{C}} \cup \Delta$. The notation does not specify where $\Delta$ was inserted.
We say a scale function $k$ \textit{accepts insertions} (or is \textit{insertion-accepting}) if given any $t$-digest $\mathcal{C}$ with respect to $k$, every cluster $C_i \in \mathcal{C}$ continues to have $k$-size less than or equal to $1$ when its quantile range is calculated in ${\cal{C}} \cup \Delta$.

As the condition $k(q_{right}) - k(q_{left}) \leq 1$ indeed implies a scale for $k$, it is natural to restrict our attention to insertion-accepting scale functions with the property that $\delta k$ is again insertion-accepting for any $\delta > 0$. We call such insertion-accepting scale functions \textit{decent}.

If the insertion is to the left of a cluster spanning $[q_1, q_2]$ in $\mathcal{C}$, the cluster spans $[\alpha + (1 - \alpha) q_1, \alpha + (1 - \alpha) q_2]$ in ${\cal{C}} \cup \Delta$, where $0 < \alpha < 1$ is the proportion represented by $\Delta$ in ${\cal{C}} \cup \Delta$ (i.e., $|\Delta | / (|\Delta| + |\cal{C}|))$. When the insertion is to the right, the cluster spans $[(1 - \alpha) q_1, (1 - \alpha) q_2]$ in ${\cal{C}} \cup \Delta$.

\subsection{Characterizations}

\begin{lemma} \label{two_vble_char}
The scale function $k$ is decent if and only if for all $0 < q_1 < q_2 < 1$ and all $\alpha \in (0,1)$, we have $k(q_2') - k(q_1') \leq k(q_2) - k(q_1)$ for $(q_1', q_2') =  (\alpha + (1 - \alpha) q_1, \alpha + (1 - \alpha) q_2)$ and for $(q_1', q_2') = ((1 - \alpha) q_1, (1 - \alpha) q_2)$.

\end{lemma}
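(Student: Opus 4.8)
The plan is to convert the operational notion of decency into the analytic inequality by exploiting two structural features. First, the displayed inequality $k(q_2')-k(q_1') \le k(q_2)-k(q_1)$ is invariant under the rescaling $k \mapsto \delta k$ for $\delta>0$, which matches the scale-invariance built into the definition of \emph{decent}. Second, a general insertion acts on the quantile interval $[q_1,q_2]$ of a fixed cluster by the affine map $q \mapsto (1-\alpha)q+\beta$ with $\beta\in[0,\alpha]$, where $\beta$ records the proportion of $\Delta$ falling to the left of the cluster; its two extreme cases $\beta=\alpha$ and $\beta=0$ are exactly the left- and right-insertion formulas recorded just before the lemma. The guiding principle is that decency is equivalent to the $k$-size of a cluster never increasing under any insertion, and this is the bridge between ``$\delta k$ is insertion-accepting for every $\delta$'' and the stated inequality.

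For the \textbf{only if} direction, suppose $k$ is decent but the left- or right-insertion inequality fails, i.e.\ some pure insertion sends a cluster spanning $[q_1,q_2]$, of $k$-size $S:=k(q_2)-k(q_1)$, to one of $k$-size $S'>S$. Because $S<S'$ I may choose $\delta>0$ with $\delta S<1<\delta S'$. I then build a $\delta k$-digest consisting of one cluster occupying the quantile interval $[q_1,q_2]$ --- legitimate since its $\delta k$-size is $\delta S<1$ --- together with singleton unit-weight clusters filling $[0,q_1]$ and $[q_2,1]$, which are admissible regardless of $\delta k$. Placing $\Delta$ entirely below (resp.\ above) this cluster, with $|\Delta|$ chosen so the proportion equals $\alpha$, realizes the offending pure insertion and pushes the cluster to $\delta k$-size $\delta S'>1$, contradicting that $\delta k$ is insertion-accepting. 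Hence both displayed inequalities hold. Since weights are integral I would first carry this out for rational $q_1,q_2,\alpha$ and then pass to arbitrary reals using continuity of $k$.

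For the \textbf{if} direction, I must upgrade the two extreme inequalities to every insertion and then to every $\delta k$. Given a general insertion acting by $q\mapsto(1-\alpha)q+\beta$ with $0\le\beta\le\alpha$, I factor it as a right-insertion followed by a left-insertion: setting $\alpha_2=\beta$ and $1-\alpha_1=(1-\alpha)/(1-\beta)$, one checks $\alpha_1,\alpha_2\in[0,1)$ and that the composite of $q\mapsto(1-\alpha_1)q$ with $r\mapsto\alpha_2+(1-\alpha_2)r$ is precisely the target map. Applying the right-insertion inequality to $[q_1,q_2]$ and then the left-insertion inequality to the intermediate interval $[(1-\alpha_1)q_1,(1-\alpha_1)q_2]$, which still lies strictly inside $(0,1)$, shows the $k$-size does not increase under the general insertion, so $k$ is insertion-accepting. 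Finally, multiplying the two hypothesized inequalities by any $\delta>0$ shows $\delta k$ satisfies them too, so the same argument makes $\delta k$ insertion-accepting; thus $k$ is decent.

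The two steps I expect to demand the most care are the factorization of an arbitrary split insertion into an extreme right-insertion and an extreme left-insertion --- this is exactly what lets the two endpoint hypotheses control every intermediate value of $\beta$, and one must confirm the intermediate interval stays in $(0,1)$ so the hypothesis is applicable --- and, in the only-if direction, the existence of a genuine $t$-digest realizing a prescribed cluster interval and insertion proportion. Using unit-weight filler clusters sidesteps any difficulty coming from jumps or blow-up of $k$ near the endpoints, reducing the construction to a single constrained cluster, after which the remaining integrality issue is handled by a routine density-and-continuity argument.
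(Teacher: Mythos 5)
Your proposal follows the paper's strategy in both directions: the ``if'' direction rests on the scale-invariance of the displayed inequality, and the ``only if'' direction on choosing $\delta$ with $\delta\bigl(k(q_2')-k(q_1')\bigr) > 1 > \delta\bigl(k(q_2)-k(q_1)\bigr)$ and exhibiting a digest that realizes the offending transformation. Two of your elaborations genuinely improve on the paper's terseness. The factorization of a general straddling insertion $q \mapsto (1-\alpha)q+\beta$, $0 \le \beta \le \alpha$, as a pure right-insertion followed by a pure left-insertion (with $\alpha_2=\beta$ and $1-\alpha_1 = (1-\alpha)/(1-\beta)$) is exactly what is needed to justify the paper's ``Clearly the condition implies $k$ accepts insertions,'' and your checks (parameters in $[0,1)$, intermediate interval inside $(0,1)$) are correct. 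The explicit realizing digest --- one constrained cluster plus unit-weight fillers, which are admissible regardless of $\delta k$ --- is also a sound way to make the paper's ``a set of clusters realizing the transformation'' concrete.

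There is, however, one genuine logical problem: your passage from rational parameters to arbitrary reals ``using continuity of $k$'' is circular in the context of this paper. Continuity is not a standing hypothesis --- $k$ is only assumed non-decreasing and piecewise differentiable, so jump discontinuities are allowed --- and the fact that a \emph{decent} scale function is continuous is proved two results later, by an argument that invokes precisely the direction of the present lemma you are using continuity to establish. Monotonicity alone cannot substitute: to preserve a failed inequality under perturbation you would need $\tilde q_1 \ge q_1$, $\tilde q_2 \le q_2$, $\tilde q_1' \le q_1'$, and $\tilde q_2' \ge q_2'$ simultaneously, and since $q \mapsto \alpha + (1-\alpha)q$ is increasing in both $q$ and $\alpha$, these constraints force $(\tilde q_1, \tilde q_2, \tilde\alpha) = (q_1, q_2, \alpha)$. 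The gap is repairable: either work in the idealized model where cluster weights may be arbitrary positive reals (which is what the paper's own two-line proof implicitly does, so that every real triple is exactly realizable), or restructure the deductions --- prove the lemma for rational triples by your explicit construction, note that the continuity lemma needs only rational instances, and only then extend to real triples using the now-available continuity. As written, though, your proof assumes a fact whose proof depends on the lemma being proved.
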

\begin{proof}
Clearly the condition implies $k$ accepts insertions. Since the condition is preserved under scaling by $\delta > 0$, the condition implies $\delta k$ accepts insertions, i.e., $k$ is decent.

If $k(q_2') - k(q_1') > k(q_2) - k(q_1)$ for some $q_1, q_2, \alpha$, we can find $\delta > 0 $ such that $\delta (k(q_2') - k(q_1') ) > 1 >  \delta( k(q_2) - k(q_1))$. An insertion into a set of clusters realizing the transformation $(q_1, q_2) \mapsto (q_1', q_2')$ would then violate the insertion-accepting condition for $\delta k$, and so decency implies the condition.
\end{proof}

Rearranging the inequality of the preceding lemma gives the following characterization of decent scale functions.

\begin{corollary} \label{one_vble_char}
The scale function $k$ is decent if and only if for all $\alpha \in (0, 1)$, the functions $k((\alpha + (1 - \alpha) q) - k(q)$ and $k((1 - \alpha) q) - k(q)$ are non-increasing on $[0,1]$.
\end{corollary}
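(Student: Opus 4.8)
The plan is to obtain the corollary directly from Lemma~\ref{two_vble_char} by recognizing its two-variable inequality as nothing more than the statement that a suitable first difference of $k$ is non-increasing; the only genuine care is needed at the endpoints.

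Fix $\alpha \in (0,1)$. For the left-insertion transformation I would set $g(q) := k(\alpha + (1-\alpha)q) - k(q)$, and for the right-insertion transformation $h(q) := k((1-\alpha)q) - k(q)$. Consider the left case and take $0 < q_1 < q_2 < 1$ with $q_i' := \alpha + (1-\alpha)q_i$. The identity
\[
g(q_2) - g(q_1) = \bigl(k(q_2') - k(q_1')\bigr) - \bigl(k(q_2) - k(q_1)\bigr)
\]
shows that $g(q_2) \le g(q_1)$ holds if and only if the lemma's inequality $k(q_2') - k(q_1') \le k(q_2) - k(q_1)$ holds. Since requiring $g(q_2) \le g(q_1)$ for every pair $q_1 < q_2$ is exactly the assertion that $g$ is non-increasing, the first clause of Lemma~\ref{two_vble_char} (for this $\alpha$) is equivalent to monotonicity of $g$. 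The right case is identical, with the same identity for $h$ and $q_i' = (1-\alpha)q_i$. Crucially the identity is an equivalence, so both directions of the ``if and only if'' come at once; quantifying over all $\alpha \in (0,1)$ then matches the two clauses of the lemma to the two monotonicity statements of the corollary.

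The one point to watch is that Lemma~\ref{two_vble_char} quantifies over the open interval $0 < q_1 < q_2 < 1$, whereas the corollary asserts monotonicity on $[0,1]$. The rearrangement above gives monotonicity of $g$ and $h$ only on $(0,1)$, and I would extend to the endpoints by taking one-sided limits: a function non-increasing on $(0,1)$ has one-sided limits at $0$ and $1$ in the extended reals, and assigning these as endpoint values preserves monotonicity; where $k$ is continuous at $0$ or $1$ these limits agree with the honest values $g(0), g(1)$ (resp.\ $h(0), h(1)$). I do not expect any real obstacle here---the whole content is the one-line algebraic identity above---so the main thing to be vigilant about is simply that the identity is run as an equivalence and that the endpoint extension is handled cleanly.
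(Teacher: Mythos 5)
Your proposal is correct and takes essentially the same route as the paper: the paper proves this corollary in a single line, observing that the inequality of Lemma~\ref{two_vble_char} rearranges into the monotonicity statement, which is exactly the algebraic identity you wrote down and run as an equivalence. Your additional care at the endpoints goes beyond what the paper records (it silently treats the open-interval and closed-interval statements as the same); note that the honest endpoint inequalities can be recovered without any continuity assumption at $0$ or $1$, using only the standing hypothesis that $k$ is non-decreasing, so the extension is even cleaner than your limit argument suggests.
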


\subsection{Properties}

\begin{lemma} \label{cone}
Decent scale functions form a convex cone: if $k_1 ,k_2$ are decent, then so is $\delta_1 k_1 + \delta_2 k_2$ for any $\delta_1, \delta_2 > 0$.
\end{lemma}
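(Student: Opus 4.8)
The plan is to reduce everything to the difference-inequality characterization established in Lemma~\ref{two_vble_char}, and then exploit the fact that this condition is linear in the scale function. Concretely, for fixed $0 < q_1 < q_2 < 1$, $\alpha \in (0,1)$, and a choice of left- or right-insertion transformation $(q_1, q_2) \mapsto (q_1', q_2')$, define the quantity
$$D_k := \bigl(k(q_2) - k(q_1)\bigr) - \bigl(k(q_2') - k(q_1')\bigr).$$
By Lemma~\ref{two_vble_char}, $k$ is decent precisely when $D_k \geq 0$ for every such choice of parameters and transformation.

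The key observation is that $k \mapsto D_k$ is a linear functional: for any scalars $\delta_1, \delta_2$ and scale functions $k_1, k_2$ we have $D_{\delta_1 k_1 + \delta_2 k_2} = \delta_1 D_{k_1} + \delta_2 D_{k_2}$, since each of the four point-evaluations entering $D_k$ is linear in $k$. Hence if $k_1$ and $k_2$ are decent and $\delta_1, \delta_2 > 0$, then $D_{\delta_1 k_1 + \delta_2 k_2} = \delta_1 D_{k_1} + \delta_2 D_{k_2} \geq 0$ for every admissible choice of parameters, being a nonnegative combination of nonnegative quantities. Applying the characterization in the reverse direction, $\delta_1 k_1 + \delta_2 k_2$ is decent.

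Two small points complete the argument. First, one should check that $\delta_1 k_1 + \delta_2 k_2$ is a legitimate scale function in the required sense: it is non-decreasing, being a nonnegative combination of non-decreasing functions, and piecewise differentiable, being differentiable off the union of the two finite sets of break-points. Second, the single-summand case is just scaling, which is already built into the definition of decency, so it needs no separate treatment. I do not anticipate any genuine obstacle: the entire content of the lemma is the recognition that decency, via Lemma~\ref{two_vble_char}, is cut out by a system of homogeneous linear inequalities, and convex cones are exactly the solution sets of such systems. One could equally well argue from Corollary~\ref{one_vble_char}, since a nonnegative combination of non-increasing functions is again non-increasing; the only point of care is to work with whichever form makes the linearity most transparent to invoke.
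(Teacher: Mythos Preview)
Your proof is correct and is exactly the argument the paper has in mind: the paper's proof is the single sentence ``Use the characterization of Corollary~\ref{one_vble_char} or that of Lemma~\ref{two_vble_char},'' and you have simply unpacked the linearity observation that makes either characterization immediately yield the result.
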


\begin{proof}
Use the characterization of Corollary \ref{one_vble_char} or that of Lemma \ref{two_vble_char}.
\end{proof}

\begin{lemma}
A decent scale function is continuous.
\end{lemma}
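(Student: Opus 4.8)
The plan is to run everything through monotonicity. Since $k$ is non-decreasing it has one-sided limits at every point, and its only possible discontinuities are upward jumps; writing $J(q) := k(q^+) - k(q^-)$ for the jump at $q$ (read one-sidedly at the endpoints), it suffices to prove $J \equiv 0$. The key input is Corollary \ref{one_vble_char}, which I would combine with the elementary fact that a non-decreasing real-valued function on $[0,1]$ has summable jumps, $\sum_q J(q) \le k(1) - k(0) < \infty$, so in particular only countably many points can be discontinuities.

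First I would fix $c = 1 - \alpha \in (0,1)$ and read off the jump of the non-increasing function $\phi_c(q) := k(cq) - k(q)$ at an interior point $p$. Because $c > 0$, as $q \to p^{\pm}$ we have $cq \to (cp)^{\pm}$, so the jump of $\phi_c$ at $p$ is exactly $J(cp) - J(p)$; non-increasingness forces this to be $\le 0$, giving $J(cp) \le J(p)$ for every $c \in (0,1)$. Rewriting this for arbitrary interior $x < y$ via $c = x/y$ shows $J$ is non-decreasing on $(0,1)$. Consequently, if $J(p) = C > 0$ for some $p \in (0,1)$, then $J(y) \ge C$ for all $y$ in the interval $(p,1)$, which already contradicts $\sum_q J(q) < \infty$ (any finitely many such points contribute an arbitrarily large sum). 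Hence $J \equiv 0$ on $(0,1)$ and $k$ is continuous there. The companion function $\psi_c(q) := k((1-c) + cq) - k(q)$ from Corollary \ref{one_vble_char} yields the mirror inequality $J((1-c)+cp) \le J(p)$ with $(1-c)+cp > p$, so $J$ is simultaneously non-increasing on $(0,1)$; either direction closes the argument, and together they make the symmetric point that $J$ is constant and therefore zero.

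The step I expect to be the \textbf{main obstacle} is the two endpoints, which the interior computation does not reach: Lemma \ref{two_vble_char} and Corollary \ref{one_vble_char} are tested only on $0 < q_1 < q_2 < 1$, and a one-sided jump of $k$ at $0$ or at $1$ enters $\phi_c$ and $\psi_c$ through the $-k(q)$ term as a \emph{downward} jump, which is entirely compatible with their being non-increasing. I would therefore establish right-continuity at $0$ and left-continuity at $1$ separately, either by a limiting argument that slides interior intervals $[q_1,q_2]$ toward the endpoint and passes to the limit in the contraction inequality, or by appealing to the standing regularity hypotheses on the scale functions under consideration; pinning down precisely what is required at the boundary is the delicate part of the proof, whereas the interior case is forced cleanly by Corollary \ref{one_vble_char}.
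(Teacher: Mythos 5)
Your interior argument is correct and complete, but it follows a genuinely different route from the paper's. The paper's proof is a one-shot contradiction: at a jump point $q^* \in (0,1)$ it picks $q_1 < q_2 < q^*$ with $k(q_2) - k(q_1)$ strictly smaller than the jump $k_{right}(q^*) - k_{left}(q^*)$, then chooses $\alpha$ so that the insertion carries $q_2$ but not $q_1$ across $q^*$; monotonicity then forces $k(q_2') - k(q_1') \geq k_{right}(q^*) - k_{left}(q^*)$, contradicting Lemma \ref{two_vble_char}. You instead derive a structural property of the jump function $J$ from Corollary \ref{one_vble_char}: passing to one-sided limits in the non-increasing functions $\phi_c, \psi_c$ shows $J$ is simultaneously non-decreasing and non-increasing on $(0,1)$, hence constant, and summability of the jumps of a monotone function rules out a positive constant. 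Both arguments are sound; the paper's is shorter and needs only one explicit insertion, while yours uses nothing beyond monotonicity (it does not even need the standing piecewise-differentiability hypothesis, which the paper's proof quietly invokes as ``piecewise continuity'' to produce $q_1, q_2$ with small increment) and it exposes the stronger fact that all interior jumps would have to be equal. As for the endpoints, which you defer as the ``main obstacle'': this is not a gap relative to the paper, because the paper's proof likewise treats only $q^* \in (0,1)$ --- and in fact endpoint continuity is not provable at all, since Lemma \ref{two_vble_char} and Corollary \ref{one_vble_char} constrain $k$ only through interior pairs $0 < q_1 < q_2 < 1$, and (exactly as you observe) an endpoint jump enters the relevant differences with the harmless downward sign. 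Concretely, redefining a finite-valued decent scale function such as $k_1$ at $q = 0$ to lie strictly below its right-hand limit leaves it decent but discontinuous at $0$: interior clusters never evaluate $k$ at $0$, and clusters with left endpoint $0$ only shrink in $k$-size under insertion. So the lemma should be read as asserting continuity on the interior, which is all that the paper proves and all that is used later (e.g.\ in Proposition \ref{differentiable}); with that reading, your proof is finished as written and no boundary argument remains to be supplied.
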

\begin{proof} Let $q^* \in (0, 1)$ be a point where continuity fails, and let $k_{left}(q^*) \neq k_{right}(q^*)$ denote the left and right hand limits of $k$ at $q^*$. By piecewise continuity, we can find a pair of points $q_1, q_2 < q^*$ such that $k(q_2) - k(q_1) < k_{right}(q^*) - k_{left}(q^*)$. For an insertion pushing $q_2$, but not $q_1$, across the point of discontinuity, we have $q_2' > q^* > q_1'$ and so $k(q_2') - k(q_1') \geq k_{right}(q^*) - k_{left}(q^*)$. Combining the inequalities produces a violation of the condition of Lemma \ref{two_vble_char}.
\end{proof}

\begin{proposition} \label{differentiable}
A decent scale function is differentiable.
\end{proposition}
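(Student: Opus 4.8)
The plan is to combine the two facts already in hand---that a decent $k$ is continuous and that it is piecewise differentiable---with the monotonicity characterization of Corollary \ref{one_vble_char}. Piecewise differentiability guarantees that the one-sided derivatives $k'_-(q^*)$ and $k'_+(q^*)$ exist at every interior point and that differentiability can fail only at isolated breakpoints, so it suffices to rule out corners. I will show that at any $q^* \in (0,1)$ one cannot have $a := k'_-(q^*) \ne k'_+(q^*) =: b$, where $a, b \ge 0$ since $k$ is non-decreasing. There are two cases, a convex corner ($a < b$) and a concave corner ($a > b$); I would dispatch the first using the function $g_\alpha(q) := k(\alpha + (1-\alpha) q) - k(q)$ and the second using $h_\alpha(q) := k((1-\alpha) q) - k(q)$, both of which are non-increasing for every $\alpha \in (0,1)$ by Corollary \ref{one_vble_char}.

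For the convex case the idea is to exhibit a two-point violation of the non-increasing property of $g_\alpha$, equivalently of the inequality in Lemma \ref{two_vble_char}. I would fix a small $\alpha$ and a short interval $[q, q']$ lying just to the left of $q^*$, chosen so that its image $[\alpha + (1-\alpha) q, \alpha + (1-\alpha) q']$ under the affine map $q \mapsto \alpha + (1-\alpha) q$ lands just to the right of $q^*$; this is possible because that map pushes points toward $1$, so for $q$ slightly below $q^*$ its image sits above $q^*$ by roughly $\alpha(1 - q^*)$. On the left piece $k'$ is close to $a$ and on the right piece close to $b$, so the increment of $k$ over the original interval is approximately $a(q' - q)$ while the increment over the (shorter) image interval is approximately $b(1-\alpha)(q'-q)$. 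Choosing $\alpha$ small enough that $b(1-\alpha) > a$---possible precisely because $b > a \ge 0$---makes the image increment strictly exceed the original increment, i.e.\ $g_\alpha(q) < g_\alpha(q')$ with $q < q'$, contradicting that $g_\alpha$ is non-increasing.

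To make this rigorous I would first fix $\alpha$ with $b(1-\alpha) > a$ and then $\eta > 0$ small enough that $(b - \eta)(1-\alpha) > a + \eta$; piecewise $C^1$-ness supplies $\epsilon > 0$ with $|k' - a| < \eta$ on $(q^* - \epsilon, q^*)$ and $|k' - b| < \eta$ on $(q^*, q^* + \epsilon)$. Bounding the increment over $[q,q']$ above by $(a+\eta)(q'-q)$ and the increment over the image below by $(b-\eta)(1-\alpha)(q'-q)$ then forces the strict inequality, provided the interval is short enough that both it and its image remain in the respective one-sided $\epsilon$-neighborhoods of $q^*$. The main obstacle is exactly this simultaneous placement: the single parameter $\alpha$ controls both the localization (the image overshoots $q^*$ by about $\alpha(1-q^*)$, which must stay below $\epsilon$) and the slope margin that must beat the contraction factor $1-\alpha$. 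Because $b > a \ge 0$ leaves a positive gap, one can fix $\alpha$ to secure the margin first and only afterwards shrink $\epsilon$ and the interval to satisfy the localization.

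The concave case $a > b$ is entirely analogous, now using $h_\alpha$: its inner map $q \mapsto (1-\alpha) q$ pushes points toward $0$, so I would place the original interval just to the right of $q^*$ and arrange its image to land just to the left, where $k$ is steeper, obtaining a violation once $a(1-\alpha) > b$. Alternatively this case reduces to the convex one by the reflection $\tilde k(q) := -k(1-q)$, which is again a decent scale function and which interchanges the two conditions of Corollary \ref{one_vble_char} while turning concave corners into convex ones. Having excluded both types of corner, $k'_-(q^*) = k'_+(q^*)$ at every interior point, so $k$ is differentiable.
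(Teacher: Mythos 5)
Your overall plan is the same as the paper's: at a corner with $a = k'_-(q^*) < k'_+(q^*) = b$, produce a pair of points whose $k$-increment grows under an insertion because the slope gap $b>a$ beats the contraction factor $1-\alpha$, contradicting Lemma \ref{two_vble_char} (and your reflection $\tilde k(q) = -k(1-q)$ is a legitimate way to dispatch the concave case). However, the execution has two genuine problems. First, you invoke ``piecewise $C^1$-ness'' to obtain $|k'-a|<\eta$ on $(q^*-\epsilon,q^*)$ and $|k'-b|<\eta$ on $(q^*,q^*+\epsilon)$, but the paper assumes only that $k$ is piecewise \emph{differentiable}; a differentiable function's derivative need not stay near its one-sided derivative value on any one-sided neighborhood. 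For instance, $k(x)=bx+\int_0^x\cos(1/t)\,dt$ with $b\ge 1$ is non-decreasing and has right derivative $b$ at $0$, yet $k'$ oscillates through $[b-1,b+1]$ on every interval $(0,\epsilon)$; so the bounds you need are simply not available under the paper's hypotheses. Second, even granting $C^1$, your order of choices fails: a point $q$ and its image $\alpha+(1-\alpha)q$ are separated by exactly $\alpha(1-q)$, so the original interval and its image can both lie within $\epsilon$ of $q^*$ only when $\alpha(1-q^*)$ is at most about $2\epsilon$. Since $\epsilon$ is handed to you by the regularity of $k'$ and cannot be enlarged --- and shrinking it, as you propose (``only afterwards shrink $\epsilon$''), only makes this constraint harder --- $\alpha$ must be chosen \emph{after} $\epsilon$, not before.

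Both defects are repaired simultaneously by the paper's maneuver, which is to couple $\alpha$ to the interval rather than fix it: given $q_n \uparrow q^*$, set $\alpha_n = (q^*-q_n)/(1-q_n)$, so the insertion maps $q_n$ exactly to $q^*$ and maps $q^*$ to $q_n' = q^* + (1-\alpha_n)(q^*-q_n)$. The before-interval $[q_n,q^*]$ and the after-interval $[q^*,q_n']$ are both anchored at $q^*$, so their increments are governed by the one-sided derivatives at $q^*$ alone, with no control of $k'$ on neighborhoods needed: $k(q^*)-k(q_n) = \bigl(a+o(1)\bigr)(q^*-q_n)$ and $k(q_n')-k(q^*) = \bigl(b+o(1)\bigr)(1-\alpha_n)(q^*-q_n)$. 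Since $\alpha_n \to 0$ and $b>a$, the second quantity exceeds the first for $n$ large, contradicting Lemma \ref{two_vble_char}. (If you prefer to keep your fixed-$\alpha$ scheme, it does go through under the stronger piecewise-$C^1$ hypothesis once the quantifiers are reordered: fix $\eta$ with $b-\eta>a+\eta$, let $\epsilon$ be supplied by the continuity of $k'$, and only then take $\alpha$ small enough that both $(b-\eta)(1-\alpha)>a+\eta$ and $\alpha(1-q^*)<\epsilon$ hold; both are smallness conditions on $\alpha$, so they are compatible.)
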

\begin{proof}
Let $q^* \in (0, 1)$ be a point where differentiability fails, suppose $k'_{left}(q^*)$ and $k'_{right}(q^*)$ both exist, and suppose $k'_{left}(q^*) < k'_{right}(q^*)$. In this case we shift a centroid to the right; if the inequality were reversed, we would shift it to the left.

Let $q_n \to q^*$ be a sequence approaching $q^*$ from below, and define $\alpha_n := (q^* - q_n) / (1 - q_n)$ and $q_n' := \alpha_n + ( 1 - \alpha_n) q^*$. Note that $\alpha_n \to 0$, and that $q_n' - q^* = ( 1- \alpha_n) (q^* - q_n)$.
Before insertion the picture is

$$k(q^*) - k(q_n) \to  k'_{left}(q^*) (q^* - q_n),$$

\noindent and after insertion it is

$$k(q'_n) - k(q^*) \to k'_{right}(q^*)(q'_n - q^*) = k'_{right}(q^*) ( 1- \alpha_n) (q^* - q_n).$$

By choosing $n$ large enough, we can guarantee that $k'_{right}(q^*) ( 1- \alpha_n)  >  k'_{left}(q^*)$, and therefore $k(q'_n) - k(q^*) > k(q^*) - k(q_n)$, violating the insertion-accepting property by Lemma \ref{two_vble_char}.
\end{proof}

\begin{remark}  \label{examples} By \citep{dunning2019conservation}, the following are examples of decent scale functions:
$$k_0(q) = {\delta \over 2} q $$

$$k_1(q) =  {\delta \over 2 \pi} \arcsin(2q-1) $$

$$ k_2(q) =  {\delta \over Z(n)} \log{q \over 1-q} $$
 
 $$ k_3(q) =  {\delta \over Z(n)} 
\begin{cases} 
   \log (2 q) & q \leq {1 \over 2} \\
  - \log 2(1-q) & q > {1 \over 2}
       \end{cases}.$$
\end{remark}

\begin{remark}
The unnormalized forms of $k_2, k_3$ (i.e., without the $Z(n)$ term) are also decent. Our conditions in quantile space for the unnormalized forms imply decency in the finite data case for the normalized forms since the function $Z(n)$ is non-decreasing.
\end{remark}

\section{Computations} \label{comps}

\subsection{Piecewise defined functions} \label{piecewise}

\textbf{Gluing.} Suppose $k_l$ and $k_r$ are decent scale functions, and $p \in (0,1)$. Let $k$ denote the function which is $k_l$ on $[0,p]$ and $k_r$ on $(p, 1]$. 
For $k$ to be decent, Proposition \ref{differentiable} implies that $k'_l$ and $k'_r$ must agree at $p$, so one natural approach to gluing is to take a decent scale function $k_r$ (e.g., from the list in Remark \ref{examples}), choose a point $p \in (0,1)$, and let $k_l$ be the best linear approximation to $k_r$ at $p$, i.e., use the function:

 \[  k(q) =  \begin{cases} 
   k'_r(p) (q - p) + k_r(p)  & 0 \leq q \leq  p  \\
    k_r(q) & p < q \leq 1
       \end{cases}
\] 

To show $k$ is decent, by Lemma \ref{one_vble_char} it suffices to show $k((\alpha + (1 - \alpha) q) - k(q)$ and $k((1 - \alpha) q) - k(q)$ are non-increasing on $[0, 1]$. Note if $\alpha + (1 - \alpha) q$ and $q$ are both greater than or equal to $p$, the decency of $k_r$ implies the necessary non-increasing property for $k$ (and similarly via $k_l$ if both are less than or equal to $p$), and similarly for $(1 - \alpha) q$ and $q$. Therefore it suffices to show the non-increasing property for insertions moving $q$ from one side of $p$ to the other, i.e., for $\alpha, p, q$ such that 

\begin{itemize}

\item (left to right) $q \leq p$ and $\alpha + (1 - \alpha) q \geq p$, or

\item (right to left) $(1 - \alpha) q \leq p$ and $q \geq p.$

\end{itemize}

\noindent For the functions $k_r$ we consider, the point $p={1 \over 2}$ is of particular interest since it minimizes the derivative $k_r'$, hence the cluster size for $q \leq p$ is as large as possible.

\textbf{Notation and strategy.} For ease of exposition, we establish common notation for the next three propositions (all concerning the gluing construction). For the case of shifting from left to right, we need to show $g(q) :=  k ( \alpha + (1 - \alpha)q    ) - k( q )$ is non-increasing on the interval defined by $\alpha + (1 - \alpha) q \geq p$ and $q  \leq  p$. For the case of shifting from right to left, we need to show $h(q) :=  k ( (1 - \alpha) q ) -k( q )$ is non-increasing on the interval defined by $(1 - \alpha)q \leq p$ and $q \geq p$. We accomplish this by verifying $g'(q) \leq 0$ and $h'(q) \leq 0$ on the relevant domains. The decency results hold for positive scalar multiples of our scale functions as well (decency is a property of the determined ray), but we leave this implicit for notational simplicity.

\begin{remark}
The construction can be modified in the obvious way to reverse the emphasis on the tails, i.e., using a non-linear scale function on $[0,p]$ and the linear function describing its tangent line at $p$ for $(p,1]$, but we do not explicitly state this variant in our results. The variant with higher accuracy near $q=0$ is reminiscent of a high dynamic range histogram, though the $t$-digest error is still bounded in terms of the quantile $q$ rather than the value of the observation itself.

\end{remark}

\begin{proposition}
For any $p \in (0,1)$, the scale function

 \[ k(q) =  \begin{cases} 
    \frac{q-p}{2 \sqrt{p - p^2}} +   {1 \over 2} \arcsin(2p-1)   & 0 \leq q \leq  p  \\
      {1 \over 2} \arcsin(2q - 1) & p <  q \leq 1
   \end{cases}
\] 

is decent.

\end{proposition}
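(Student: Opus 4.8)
The plan is to recognize this function as the gluing construction of Subsection~\ref{piecewise} applied to $k_r(q) = \tfrac{1}{2}\arcsin(2q - 1)$, which is decent by Remark~\ref{examples}. First I would confirm the linear branch really is the tangent line: using $1 - (2q-1)^2 = 4q(1-q)$ one finds $k_r'(q) = 1/\!\left(2\sqrt{q - q^2}\right)$, so $k_r'(p) = 1/\!\left(2\sqrt{p - p^2}\right)$ and $k_r(p) = \tfrac{1}{2}\arcsin(2p-1)$, matching the stated $[0,p]$ branch exactly. Following the \textbf{Notation and strategy} paragraph, it then suffices to verify $g'(q) \le 0$ and $h'(q) \le 0$ on the two cross-over domains, since the decency of $k_r$ (resp.\ of the linear piece) already handles insertions for which $q$ and its image lie on the same side of $p$.

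For the left-to-right case, on the region $q \le p \le q'$ with $q' := \alpha + (1-\alpha)q$, differentiating $g(q) = \tfrac{1}{2}\arcsin(2q' - 1) - k(q)$ gives
\[
g'(q) = \frac{1 - \alpha}{2\sqrt{q' - (q')^2}} - \frac{1}{2\sqrt{p - p^2}},
\]
so the condition $g'(q) \le 0$ is equivalent to $(1-\alpha)^2\, p(1-p) \le q'(1 - q')$ after clearing denominators and squaring (all quantities being positive). The key simplification is the identity $1 - q' = (1-\alpha)(1 - q)$, which factors $q'(1-q') = (1-\alpha)(1-q)\,q'$; cancelling one factor of $1 - \alpha$ reduces the claim to $(1-\alpha)\,p(1-p) \le (1-q)\,q'$. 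Since $1 - q \ge 1 - p$ (from $q \le p$) and $q' \ge p$, the right side is at least $p(1-p)$, while the left side is at most $p(1-p)$ because $1 - \alpha < 1$; this settles the case.

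For the right-to-left case, on the region $q'' \le p \le q$ with $q'' := (1-\alpha)q$, the same computation yields
\[
h'(q) = \frac{1 - \alpha}{2\sqrt{p - p^2}} - \frac{1}{2\sqrt{q - q^2}},
\]
and $h'(q) \le 0$ reduces to $(1-\alpha)^2\, q(1 - q) \le p(1-p)$. Here I would write the left side as the product $\left[(1-\alpha)q\right]\!\left[(1-\alpha)(1-q)\right]$ and bound each factor separately: the first equals $q'' \le p$ by hypothesis, and the second satisfies $(1-\alpha)(1-q) \le 1 - q \le 1 - p$ since $q \ge p$, so the product is at most $p(1-p)$.

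I do not expect a serious obstacle. Once the arcsin derivative is rewritten via $1 - (2q-1)^2 = 4q(1-q)$, both monotonicity conditions collapse to comparisons of products of linear factors, dispatched by the domain constraints $q \le p$, $q' \ge p$, $q'' \le p$, and $q \ge p$. The one point demanding care is the factorization $1 - \bigl(\alpha + (1-\alpha)q\bigr) = (1-\alpha)(1-q)$, which is what lets the squared inequalities split into products, together with the observation that every factor being multiplied is nonnegative, so that the termwise inequalities may legitimately be combined.
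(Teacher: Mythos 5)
Your proposal is correct and follows essentially the same route as the paper: the same reduction to the two cross-over domains, the same formulas for $g'$ and $h'$, and the same elementary comparisons after clearing denominators (your cancellation of a factor of $1-\alpha$ via $1 - q' = (1-\alpha)(1-q)$ is exactly the form in which the paper states its inequality $(\alpha + (1-\alpha)q)(1-q) \geq p(1-p)(1-\alpha)$). The only cosmetic difference is in the right-to-left case, where you bound the product $[(1-\alpha)q]\,[(1-\alpha)(1-q)]$ factor by factor while the paper bounds $p(1-p) \geq (1-\alpha)q(1-q) > (1-\alpha)^2 q(1-q)$; these are the same estimate.
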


\begin{proof}
We have
$$g(q) =  {1 \over 2} \arcsin(2  (\alpha + (1 - \alpha)q  ) - 1) -   \frac{q-p}{2 \sqrt{p - p^2}} -   {1 \over 2} \arcsin(2p-1)$$
\noindent and therefore:
$$g'(q) = \frac{  1 - \alpha  }{2 \sqrt{ (\alpha  + (1 - \alpha) q)  -  {( \alpha + (1 -\alpha)q  )}^2 }} - \frac{1}{2 \sqrt{ p - p^2}}$$

\noindent from which it follows that $g'(q)  \leq 0$ is equivalent to

$$(\alpha  + (1 - \alpha) q)(1-q) \geq p (1-p)(1 - \alpha).$$

Since $(\alpha  + (1 - \alpha) q) \geq p$ and $1-q \geq 1 - p$ (as $q \leq p$), the left hand side is greater than or equal to $p (1-p)$. Since $1 - \alpha < 1$, the desired inequality follows.

We calculate:
$$h'(q) = \frac{1 - \alpha}{2 \sqrt{p - p^2}} - \frac{1}{2 \sqrt{q  - q^2}}$$

\noindent from which it follows that $h'(q) \leq 0$ is equivalent to

$$p - p^2 \geq (q - q^2) ( 1 - \alpha)^2.$$

Since $p \geq (1 - \alpha)q$ and $1 -p \geq 1-q$, the left hand side is greater than or equal to  $(1 - \alpha)q(1-q)$, which is greater than the right hand side since $ 1 > 1 - \alpha$.
\end{proof}

\begin{proposition}

For any $p \in (0, 1)$, the scale function 

 \[ k(q) =  \begin{cases} 
    \frac{q-p}{p(1-p)} + \log \frac{p}{1-p} & 0 \leq q \leq  p  \\
    \log \frac{q}{1-q} & p <  q \leq 1
   \end{cases}
\] 

is decent.
\end{proposition}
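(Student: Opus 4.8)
The plan is to follow verbatim the template laid out in the \textbf{Notation and strategy} paragraph, since this scale function is again a tangent-line gluing of a known decent $k_r$, namely the logit $k_r(q) = \log\frac{q}{1-q}$. First I would record the ingredients of the construction: $k_r'(q) = \frac{1}{q(1-q)}$, so that $k_r'(p) = \frac{1}{p(1-p)}$ and the tangent line at $p$ is exactly $\frac{q-p}{p(1-p)} + \log\frac{p}{1-p}$, confirming that $k$ has the glued form produced by the general recipe. Then, by Corollary \ref{one_vble_char} together with the reduction in the gluing paragraph, it suffices to check $g'(q) \leq 0$ and $h'(q) \leq 0$ only on the two crossing regimes, where $q$ moves from one side of $p$ to the other.

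For the left-to-right case I would use that $q \leq p$ forces $k(q)$ to be the linear piece while $\alpha + (1-\alpha)q \geq p$ forces $k(\alpha + (1-\alpha)q)$ to be the logit piece. The one algebraic simplification worth isolating is the identity $1 - (\alpha + (1-\alpha)q) = (1-\alpha)(1-q)$; feeding this into the chain rule collapses the $(1-\alpha)$ factors and yields $g'(q) = \frac{1}{(\alpha + (1-\alpha)q)(1-q)} - \frac{1}{p(1-p)}$. Hence $g'(q) \leq 0$ is equivalent to $(\alpha + (1-\alpha)q)(1-q) \geq p(1-p)$, which follows immediately from $\alpha + (1-\alpha)q \geq p$ and $1-q \geq 1-p$ (the latter because $q \leq p$).

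For the right-to-left case, $q \geq p$ puts $k(q)$ on the logit piece and $(1-\alpha)q \leq p$ puts $k((1-\alpha)q)$ on the linear piece, giving $h'(q) = \frac{1-\alpha}{p(1-p)} - \frac{1}{q(1-q)}$. Then $h'(q) \leq 0$ reduces to $(1-\alpha)q(1-q) \leq p(1-p)$, which follows from $(1-\alpha)q \leq p$ and $1-q \leq 1-p$ (again because $q \geq p$).

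I expect no serious obstacle: the argument is structurally identical to the arcsin proposition just completed, and the logit derivative $\frac{1}{q(1-q)}$ is if anything cleaner, since it already factors into the product form that makes the final inequalities transparent. The only places demanding care are keeping track of which branch of $k$ applies to each argument on each crossing regime, and deploying the identity $1 - (\alpha + (1-\alpha)q) = (1-\alpha)(1-q)$ that makes the derivative of the logit term factor so conveniently.
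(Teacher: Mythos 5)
Your proposal is correct and follows essentially the same route as the paper: the same reduction to the two crossing regimes via the gluing strategy, the same derivatives $g'(q) = \frac{1}{(\alpha+(1-\alpha)q)(1-q)} - \frac{1}{p(1-p)}$ and $h'(q) = \frac{1-\alpha}{p(1-p)} - \frac{1}{q(1-q)}$ (the paper leaves the $(1-\alpha)$ factors uncancelled in $g'$, but it is the same expression), and the same two-line inequality arguments. No gaps.
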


\begin{proof} We calculate
$$g(q) = \log \frac{\alpha + (1 - \alpha) q}{(1- \alpha)(1-q)} - (  \frac{q-p}{p(1-p)} + \log \frac{p}{1-p}  )$$
\noindent and so
$$g'(q) =   \frac{1 - \alpha}{(1- \alpha)(1-q) (\alpha + (1 - \alpha)q )} - \frac{1}{p(1-p)}.$$

\noindent Therefore $g'(q) \leq 0$ if and only if $(1-q) (\alpha + (1 - \alpha)q ) \geq p(1-p)$. Since $\alpha + (1 - \alpha) q \geq p$ and  $1 -q \geq 1 -p$, the desired inequality follows.

For the other case,
$$h'(q) =  \frac{1 - \alpha  }{ p(1-p)  }  -  \frac{1}{(1-q)q}$$
\noindent and so $h'(q) \leq 0$ when $(1 - \alpha )(1-q)q \leq p(1-p)$.
We have $(1 - \alpha)q \leq p$ and $1-q \leq 1 - p$ and so the result follows.
\end{proof}

\begin{proposition}

For any $p \in [{1 \over 2}, 1)$, the scale function

 \[ k(q) =  \begin{cases} 
 {{q - p} \over {1 - p}}  - \log 2(1-p)  & 0 \leq q \leq  p \\
      - \log 2(1-q) & p <  q \leq 1
   \end{cases}
\] 

is decent.

For any $p \in (0, {1 \over 2})$, the scale function

 \[ k(q) =  \begin{cases} 
  {{q - p} \over p}  + \log 2p  & 0 \leq q \leq  p \\
\log 2q & p < q \leq  {1 \over 2} \\
       - \log 2(1-q) & {1 \over 2}  <  q \leq 1
   \end{cases}
\] 

is decent.
\end{proposition}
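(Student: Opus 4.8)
The plan is to treat both scale functions as instances of the gluing construction, now with $k_r = k_3$ (the piecewise-logarithmic function of Remark \ref{examples}, which is decent) and $k_l$ its tangent line at $p$. As in the paragraph on notation and strategy, decency of $k_3$ already disposes of every insertion that does not cross the glue point $p$, so by the characterization of Corollary \ref{one_vble_char} it suffices to verify $g'(q) \le 0$ and $h'(q) \le 0$ for the two crossing regimes, where $g(q) = k(\alpha + (1-\alpha)q) - k(q)$ on the set $q \le p$, $\alpha + (1-\alpha)q \ge p$, and $h(q) = k((1-\alpha)q) - k(q)$ on the set $(1-\alpha)q \le p$, $q \ge p$. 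Throughout I would exploit the identity $1 - (\alpha + (1-\alpha)q) = (1-\alpha)(1-q)$ when differentiating any $-\log 2(1-\,\cdot\,)$ branch.

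For the first function ($p \in [\tfrac12, 1)$) the point $p$ sits in the branch $k_3(q) = -\log 2(1-q)$, so a crossing insertion stays on that single logarithmic branch on the far side; this case is structurally identical to the two preceding propositions. Differentiation gives $g'(q) = \tfrac{1}{1-q} - \tfrac{1}{1-p}$ and $h'(q) = \tfrac{1-\alpha}{1-p} - \tfrac{1}{1-q}$. The former is nonpositive because $q \le p$ forces $1-q \ge 1-p$; the latter reduces to $(1-\alpha)(1-q) \le 1-p$, which follows from $q \ge p$ (so $1-q \le 1-p$) together with $1-\alpha < 1$.

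The second function ($p \in (0,\tfrac12)$) is the only one requiring genuine extra care, because here $p$ lies in the branch $k_3(q) = \log 2q$, and the image of a crossing insertion can either remain in $(p,\tfrac12]$ (still on $\log 2q$) or jump past the internal kink at $\tfrac12$ into $(\tfrac12,1]$ (onto $-\log 2(1-q)$). I would therefore split each of $g$ and $h$ into two subcases according to which branch of $k_3$ the moving endpoint occupies, differentiate in each, and reduce to an elementary inequality. For $g$: when $\alpha + (1-\alpha)q \le \tfrac12$ one finds $g'(q) \le 0 \iff p(1-\alpha) \le \alpha + (1-\alpha)q$, immediate from $\alpha + (1-\alpha)q \ge p$ and $1-\alpha < 1$; when the image exceeds $\tfrac12$ one gets $g'(q) = \tfrac{1}{1-q} - \tfrac1p$, nonpositive since $q \le p < 1-p$. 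For $h$ the two subcases reduce to $(1-\alpha)q \le p$ (the hypothesis itself) and to $(1-\alpha)(1-q) \le p$, the latter following from $(1-\alpha)q \le p$ and the fact that $q > \tfrac12$ forces $1-q < q$.

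The main obstacle — really the only non-routine point — is to confirm that this subcase split is legitimate, i.e.\ that ``$g' \le 0$ on each piece'' actually yields ``$g$ non-increasing'' across the whole crossing interval. This requires that $g$ (resp.\ $h$) be continuous at the transition value of $q$ where the moving endpoint equals $\tfrac12$ and that the two one-sided derivative formulas agree there. Both hold precisely because $k_3$ is differentiable at $\tfrac12$; a direct substitution confirms it (for $g$, both formulas collapse to $2(1-\alpha) - \tfrac1p$ at the transition, and similarly for $h$). With that continuity settled, the remaining verifications are the same bookkeeping as in the previous two proofs.
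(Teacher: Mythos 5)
Your proposal is correct and takes essentially the same route as the paper's proof: the same splitting into crossing insertions left-to-right ($g$) and right-to-left ($h$), the same further subdivision for $p < \tfrac12$ according to whether the moving endpoint lands in $(p,\tfrac12]$ or $(\tfrac12,1]$, and the same elementary inequalities, with your check that both one-sided derivative formulas agree (collapsing to $2(1-\alpha) - \tfrac1p$) corresponding to the paper's parenthetical remarks that the limits of $g'$ and $h'$ exist at the internal kink. Your argument for the branch where the image exceeds $\tfrac12$ (from $q \le p < 1-p$ deduce $1-q \ge 1-p > p$, hence $\tfrac{1}{1-q} < \tfrac1p$) is in fact the corrected form of the inequality chain that the paper states with the direction garbled.
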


\begin{proof}
First we deal with the case the split point $p$ is greater than ${1 \over 2}$. 
We have
$$g(q) = - \log 2 (1 - \alpha) (1 -q) -  {{q - p} \over {1 - p}} + \log 2(1-p)$$
\noindent and so
$$g'(q) = \frac{1}{1-q} - \frac{1}{1-p}.$$
\noindent Since $q \leq p$, we have $\frac{1}{1-q} \leq \frac{1}{1-p}$ and hence $g'(q) \leq 0$ as desired.

For insertions on the other side, we have
$$h(q) =  {{(1 - \alpha) q - p} \over {1 - p}}  - \log 2(1-p)  + \log 2(1- q)$$
\noindent and so
$$h'(q) = {{1 - \alpha} \over {1 - p}}   - {1 \over {1- q}}.$$
\noindent Then $h'(q) \leq 0$ is equivalent to $( 1-\alpha) (1 - q) \leq (1-p)$. Since $q \geq p$, we have $1-q \leq 1-p$ and hence $(1-\alpha)(1-q) \leq 1-p$ as needed.

In the case the split point $p$ is less than ${1 \over 2}$, we have
 \[ g(q) =  \begin{cases} 
 \log 2 (\alpha + (1 - \alpha)q) - {{q-p} \over p} - \log 2p & p \leq \alpha + (1 - \alpha)q \leq {1 \over 2} \\
 - \log 2 (1 - \alpha)(1-q)  - {{q-p} \over p} - \log 2p & {1 \over 2} < \alpha + (1 - \alpha)q
   \end{cases}
\]

\noindent and therefore

\[ g'(q) =  \begin{cases} 
 {{1-\alpha} \over {\alpha + (1-\alpha)q}}  -   {1 \over p}& p < \alpha + (1 - \alpha)q < {1 \over 2} \\
 {1 \over{1-q}} - {1 \over p}   & {1 \over 2} < \alpha + (1 - \alpha)q
   \end{cases} 
\] 

(Note the limit of $g'$ at $\alpha + (1 - \alpha)q = {1 \over 2}$ exists.) For the first branch, $p \leq \alpha + (1 - \alpha)q$ implies $ {1 \over p} \geq {1 \over {\alpha + (1 - \alpha)q }}$, and ${1 \over {\alpha + (1 - \alpha)q }} \geq {{1-\alpha} \over {\alpha + (1 - \alpha)q }}$, so $g'(q) \leq 0$ for these $q$. For the other branch, $q \leq p \leq {1 \over 2}$ implies $q+p \leq 1$, so $1-q \leq p$. Therefore ${1 \over {1-q}} \geq {1 \over p}$ and so $g'(q) \leq 0$ for these $q$ as well.

Next we have
\[ h(q) =  \begin{cases} 
 {{(1 - \alpha)q - p } \over p }  + \log 2p  - \log 2q & p \leq q \leq {1 \over 2} \\
 {{(1 - \alpha)q - p } \over p }  + \log 2p  + \log 2(1-q)  & {1 \over 2} < q  \leq 1
   \end{cases}
\]

\noindent and so 

\[ h'(q) =  \begin{cases} 
 {{1 - \alpha } \over p }   - {1 \over q} & p < q < {1 \over 2} \\
{{1 - \alpha } \over p }   - {1 \over {1-q}} & {1 \over 2} < q  \leq 1
   \end{cases}
\]

(Note the limit of $h'$ at $q = {1 \over 2}$ exists.)
For the first branch, $(1- \alpha) q  \leq p$ implies  ${{1- \alpha} \over p } \leq {1 \over q}$ and so $h'(q) \leq 0$ for these $q$. For the other branch, $h'(q) \leq 0$ is equivalent to $(1 - \alpha)(1 - q) \leq p$. Since $q \geq {1 \over 2}$, we have $1-q \leq q$. Multiplying this inequality by $1-\alpha$ and using $(1- \alpha) q  \leq p$ gives the inequality we need.
\end{proof}

\subsection{Polynomials} \label{polynomials}

\begin{proposition}
For any $B \geq 2$, the scale function $k(q) = q^2 + Bq$ is decent.
\end{proposition}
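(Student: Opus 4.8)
The plan is to apply the one-variable criterion of Corollary \ref{one_vble_char}. A convenient feature of this example is that $k(q) = q^2 + Bq$ is a genuine polynomial rather than a glued, piecewise function, so there is no split point to track: the functions $g(q) := k(\alpha + (1-\alpha)q) - k(q)$ and $h(q) := k((1-\alpha)q) - k(q)$ are smooth on all of $[0,1]$, and it suffices to verify $g'(q) \le 0$ and $h'(q) \le 0$ for every $q \in [0,1]$ and every $\alpha \in (0,1)$. I would compute both derivatives directly from $k'(q) = 2q + B$ via the chain rule.

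For the left-to-right case, $g'(q) = (1-\alpha)\,k'(\alpha + (1-\alpha)q) - k'(q)$. Substituting $k'$ and collecting terms in $q$, this simplifies to $g'(q) = \alpha\bigl(2(1-\alpha) + 2q(\alpha - 2) - B\bigr)$. Since $\alpha > 0$, the sign is governed entirely by the bracket. The coefficient of $q$ is $2(\alpha - 2) < 0$, so the bracket is largest at $q = 0$, where it equals $2(1-\alpha) - B$; this in turn grows as $\alpha$ decreases, with supremum $2 - B$ as $\alpha \to 0^+$. Hence the bracket never exceeds $2 - B$, which is $\le 0$ exactly when $B \ge 2$, yielding $g'(q) \le 0$ throughout.

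For the right-to-left case, the analogous computation $h'(q) = (1-\alpha)\,k'((1-\alpha)q) - k'(q)$ reduces to $h'(q) = \alpha\bigl(2q(\alpha - 2) - B\bigr)$. Here the bracket is a sum of nonpositive terms, using $\alpha < 2$, $q \ge 0$, and $B > 0$, so $h'(q) \le 0$ automatically; this branch imposes no constraint on $B$. Combining the two cases with Corollary \ref{one_vble_char} establishes decency.

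I do not expect a genuine obstacle here, as both verifications are short once the derivatives are in hand; the only real content is the bookkeeping in the first branch. The value of the computation is conceptual: the hypothesis $B \ge 2$ is forced by, and precisely matches, the worst case of the left-shift condition, which occurs in the corner $q \to 0$, $\alpha \to 0^+$ (an infinitesimal insertion near $q = 0$). The right-shift condition turns out to be slack, so the entire difficulty lies in recognizing that this corner maximizes $g'$ and checking the resulting scalar inequality $2 - B \le 0$.
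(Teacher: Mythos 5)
Your proof is correct and takes essentially the same approach as the paper's: both differentiate $g$ and $h$ directly, observe that each derivative is linear in $q$ with negative slope, and reduce the verification to the sign at $q = 0$, where $B \ge 2$ is exactly what is needed. Your factoring out of $\alpha$ and the closing remark that the corner $q \to 0$, $\alpha \to 0^+$ makes the hypothesis $B \ge 2$ sharp are pleasant refinements of, not departures from, the paper's argument.
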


\begin{proof}
We need to show, for any $\alpha \in (0,1)$, that $g(q) :=  k ( \alpha + (1 - \alpha)q    ) - k( q )$ and $h(q) :=  k ( (1 - \alpha) q ) -k( q )$ are non-increasing on the domain $q \in [0,1]$. Since
$$g(q) = {( \alpha + (1 - \alpha)q  )}^2 + B  (\alpha + (1 - \alpha)q  )  - q^2 - Bq,$$
\noindent we calculate:

\begin{align} \nonumber
g'(q) &= 2 ( \alpha + (1 - \alpha)q  ) (1 - \alpha) + B (1 - \alpha) - 2q - B \\  \nonumber
&= 2q( {(1-\alpha)}^2 - 1) + 2 \alpha ( 1 - \alpha) - B \alpha \\  \nonumber
&= 2q( {(1-\alpha)}^2 - 1) +  \alpha (2( 1 - \alpha) - B) .
\end{align}

Now $B \geq 2$ implies $2( 1 - \alpha) - B < 0$ and so $g'(0) < 0$. Since ${(1-\alpha)}^2 < 1$, $g'$ is decreasing, so $g'$ is negative on $[0,1]$, so $g$ is decreasing on this domain, as desired.

As for $h$, we have
$$h(q) =  {[ (1 - \alpha) q] }^2 + B  (1 - \alpha) q    - q^2 - Bq,$$
and so 
$$h'(q) = 2 {(1 - \alpha)}^2 q + B(1 - \alpha) - 2q - B = 2q( {(1 - \alpha)}^2 -1)  - B \alpha.$$
\noindent Now $h'(0) = - B \alpha < 0 $ and $h'$ is decreasing, so $h'$ too is negative on $[0,1]$, so $h$ is decreasing on this domain, as desired.
\end{proof}

More generally we have the following.

\begin{proposition}
For any $n > 0$, there exists $B_n > 0$ such that for $B \geq B_n$, the scale function $k(q) = q^n + Bq$ is decent.
\end{proposition}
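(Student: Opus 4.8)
The plan is to generalize the approach from the previous proposition ($n=2$), where I compute $g'(q)$ and $h'(q)$ explicitly and show each is $\leq 0$ on the relevant domain, exploiting the fact that the linear term $Bq$ contributes a constant $-B\alpha$ that can be made to dominate. By Corollary \ref{one_vble_char}, it suffices to show $g(q) := k(\alpha + (1-\alpha)q) - k(q)$ and $h(q) := k((1-\alpha)q) - k(q)$ are non-increasing on $[0,1]$ for every $\alpha \in (0,1)$. Since $k(q) = q^n + Bq$ and the decent scale functions form a convex cone (Lemma \ref{cone}), it would be equivalent to show that the "defect" coming from $q^n$ is controlled by a sufficiently large multiple of the linear (already decent) function $k_0(q) = q$.

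First I would handle $h$, which is cleaner. Writing $h(q) = ((1-\alpha)q)^n - q^n + B((1-\alpha) - 1)q$, we get $h'(q) = n(1-\alpha)^n q^{n-1} - nq^{n-1} - B\alpha = nq^{n-1}((1-\alpha)^n - 1) - B\alpha$. Since $(1-\alpha)^n < 1$ and $q \geq 0$, the first term is non-positive, so $h'(q) \leq -B\alpha < 0$ for any $B > 0$ — the $h$ case imposes no real constraint on $B$. The $g$ case is the substantive one. Here $g'(q) = n(1-\alpha)(\alpha + (1-\alpha)q)^{n-1} - nq^{n-1} - B\alpha$. The obstacle is that the polynomial part $n(1-\alpha)(\alpha+(1-\alpha)q)^{n-1} - nq^{n-1}$ is no longer monotone in $q$ as it was for $n=2$, so I cannot simply check the endpoint $g'(0)$ and invoke monotonicity of $g'$.

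The hard part will be bounding this polynomial part uniformly over both $q \in [0,1]$ and $\alpha \in (0,1)$. My strategy is to find a constant $M_n$, depending only on $n$, such that $n(1-\alpha)(\alpha + (1-\alpha)q)^{n-1} - nq^{n-1} \leq M_n\,\alpha$ for all $q \in [0,1]$, $\alpha \in (0,1)$; then setting $B_n := M_n$ forces $g'(q) \leq (M_n - B)\alpha \leq 0$ whenever $B \geq B_n$. To produce such a bound I would note that since $\alpha + (1-\alpha)q \leq 1$ we have $(\alpha + (1-\alpha)q)^{n-1} \leq 1$, giving the crude estimate $n(1-\alpha)(\alpha+(1-\alpha)q)^{n-1} - nq^{n-1} \leq n(1-\alpha) \leq n$. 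But this is not yet of the form $M_n\alpha$, so the key refinement is to extract a factor of $\alpha$ from the difference. I would do this by treating $\phi(\alpha) := n(1-\alpha)(\alpha + (1-\alpha)q)^{n-1} - nq^{n-1}$ as a function of $\alpha$ with $q$ fixed, observing $\phi(0) = 0$, and applying the mean value theorem to write $\phi(\alpha) = \alpha\,\phi'(\xi)$ for some $\xi \in (0,\alpha)$; a uniform bound on $|\phi'|$ over the compact parameter range (which exists since $\phi'$ is a continuous function of its arguments on a compact set, polynomial in $q$, $\alpha$) then yields the required $M_n$. This mean-value argument is the crux, and it is precisely what lets the single linear term $Bq$ absorb the non-convex behavior of $q^n$ once $B$ is large enough.
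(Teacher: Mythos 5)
Your proposal is correct and takes essentially the same route as the paper: identical computations of $g'$ and $h'$, the same observation that the $h$ case imposes no constraint on $B$, and the same key step of bounding the non-linear part of $g'$ by $M_n\,\alpha$ uniformly on $[0,1]\times[0,1]$ so that $-B\alpha$ dominates. The only cosmetic difference is that you extract the factor of $\alpha$ via the mean value theorem (using $\phi(0)=0$ and continuity of $\phi'$ on a compact set), whereas the paper does it algebraically, noting that the polynomial $a(\alpha,q)=(1-\alpha)(q+\alpha(1-q))^{n-1}-q^{n-1}$ vanishes at $\alpha=0$ and is therefore divisible by $\alpha$ with polynomial (hence bounded) quotient.
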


\begin{proof}
We find conditions on $B$ guaranteeing that $g(q) :=  k ( \alpha + (1 - \alpha)q    ) - k( q )$ and $h(q) :=  k ( (1 - \alpha) q ) -k( q )$ are non-increasing on the domain $q \in [0,1]$, for any $\alpha \in (0,1)$. We have:
\begin{align} \nonumber
g'(q) &= (1 - \alpha) [ n {(\alpha + (1 - \alpha)q)}^{n-1} + B ] - (n q^{n-1} + B) \\
\nonumber & = n [ (1 - \alpha) {( q + \alpha ( 1- q))}^{n-1}  - q^{n-1}] - B \alpha
\end{align}

Now $a(\alpha, q) := (1 - \alpha) {( q + \alpha ( 1- q))}^{n-1}  - q^{n-1} = (1 - \alpha)( q^{n-1} + p(\alpha, q)) - q^{n-1}$, where $ p(\alpha, q)$ is a polynomial divisible by $\alpha$ and with no $B$-dependence. Hence $a(\alpha, q) = - \alpha q^{n-1} + (1 - \alpha) p (\alpha, q)$ is divisible by $\alpha$ (say $a = \alpha a_1$) and we can write

$$g'(q) = n \alpha a_1(\alpha, q) - B \alpha = \alpha (n a_1(\alpha, q) - B).$$

\noindent Now $a_1$ is a polynomial in $\alpha, q$, in particular has a maximum $M$ on $[0,1] \times [0,1]$. Choosing $B$ larger than $M$ implies $g'(q) < 0$ as desired, so we can choose $B_n$ to be anything larger than $M$ (which depends only on $n$).

The analysis of $h(q)$ is somewhat simpler. We calculate:
\begin{align} \nonumber
h'(q) &= (1 - \alpha) [n {((1 - \alpha) q )}^{n-1} + B] - (n q^{n-1} + B) \\
\nonumber & = n q^{n-1} ( {(1 - \alpha)}^n - 1) - B \alpha
\end{align}

\noindent Now $ n q^{n-1} \geq 0$ and $( {(1 - \alpha)}^n - 1) < 0 $, so the first term is non-positive, and $B \alpha \geq 0$, so $h'(q) \leq 0 $ as desired. \end{proof}

Combining various polynomials using the convex cone property (Lemma \ref{cone}), we can generate lots of decent scale functions. The utility of this construction is somewhat unclear, as the linear term dominates more with larger $B$. The decency of certain polynomials also opens the possibility of extending the gluing construction from linear approximations to higher degree Taylor polynomials.

\section{Empirical results} \label{results}

This section summarizes the results of 100 runs of constructing a $t$-digest on one million samples from a uniform distribution, for different scale functions.
The main goal is to understand empirically the effect of the scale functions discussed in Section \ref{comps}, especially the gluing construction applied to the familiar scale functions. In all cases we set the compression parameter $\delta = 100$ and perform a compression (so the digest is ``fully merged") before calculating quantiles. 
We follow the conventions of \citep{dunning2019computing} (see also Remark \ref{examples}). 
For the piecewise defined functions, we glue at the point $p={1 \over 2}$. For $q \geq {1 \over 2}$ we use the size bounds of \citep{dunning2019size} and for $q \leq {1 \over 2}$ we bound by the reciprocal of the slope of the line; for $k_1$ these differ by higher order terms in the normalizing/compression factors.

The error is the absolute value of the difference between the cumulative distribution function evaluated at the estimate of quantile $q$ and $q$ itself, and appears in the leftmost panel. The normalized error divides this quantity by $\min(q, 1-q)$ and appears in the center panel.
For the error plots, the whiskers range from the 5th to 95th percentile of the 100 runs, the boxes cover the interquartile range, the orange line is the median, and the horizontal axis is the following transformation of quantile space:
{ \[  \begin{cases} 
   \log_{10}(q)   & 0 < q <  {1 \over 2}  \\
   0 & q = {1 \over 2} \\
     - \log_{10}(1 - q)  & {1 \over 2}<  q < 1
   \end{cases}
\] }
Note the horizontal axes have the same interpretation in each figure, but the vertical axes vary. The rightmost panel is a histogram of centroid counts over the 100 runs. 
Implementations of the asymmetric scale functions and code for generating the data and plots are available at \citep{asymmrepo} (a fork of \citep{digestrepo}).

\subsection{AVL tree results}

This subsection compares the different scale functions for $t$-digests using the AVLTree variant in the Java implementation \citep{digestrepo}.

\begin{figure}[H]
  \caption{Errors and centroid counts for the scale function $k_0$ (first row; a baseline) and for the quadratic polynomial scale function $k_{quadratic} = { \delta \over 6 } (q^2 + 2q)$ (second row). For this coefficient choice, the resulting function maps $[0,1]$ to $[0,{ \delta \over 2 }]$, as does $k_0$. Both use AVLTree.
  }
  \centering
\includegraphics[width=\textwidth]{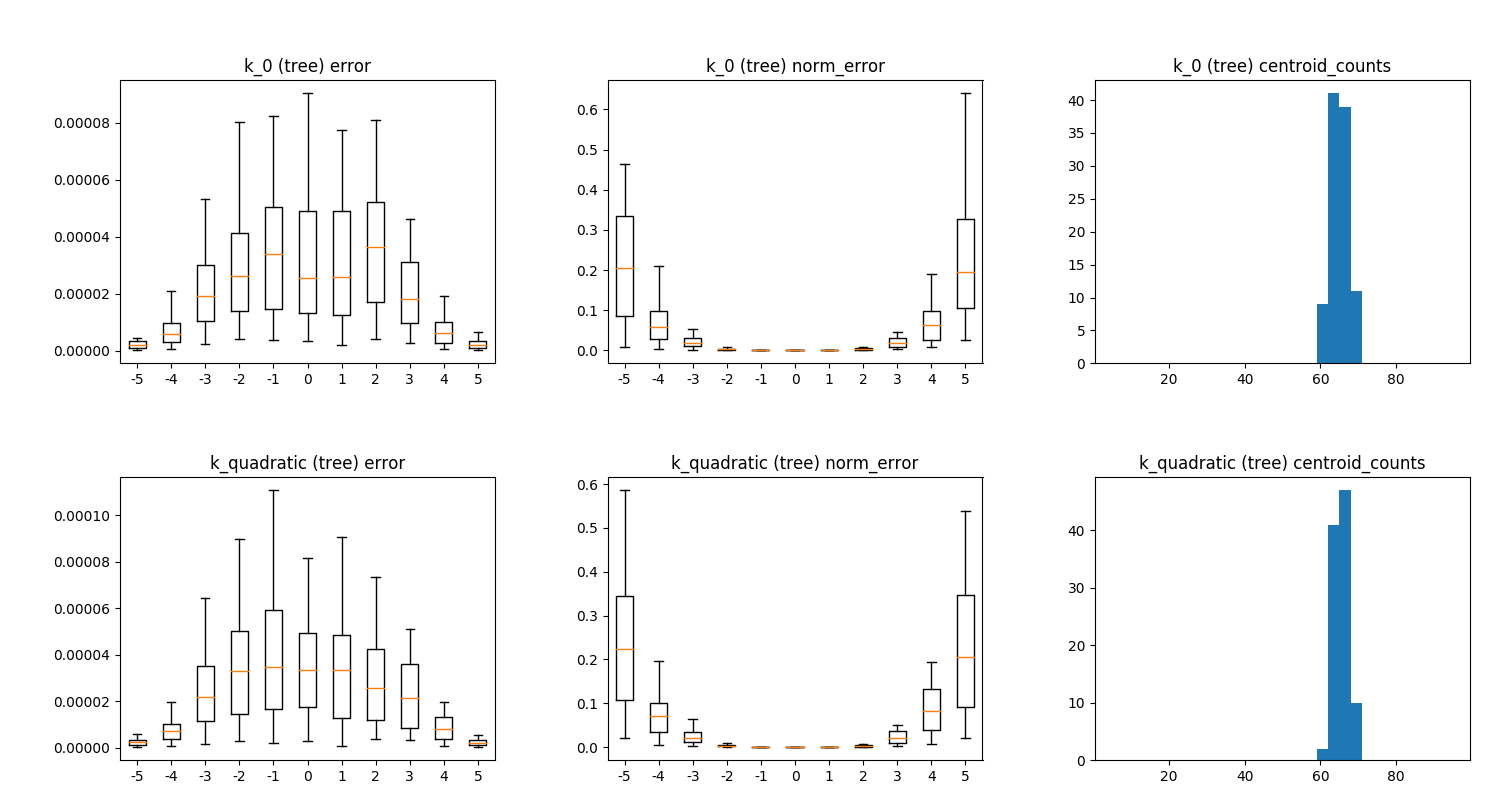}
\end{figure}

\begin{figure}[H]
  \caption{Errors and centroid counts for the usual (first row) and glued (second row) variants of the scale function $k_1$. Both use AVLTree.}
  \centering
\includegraphics[width=\textwidth]{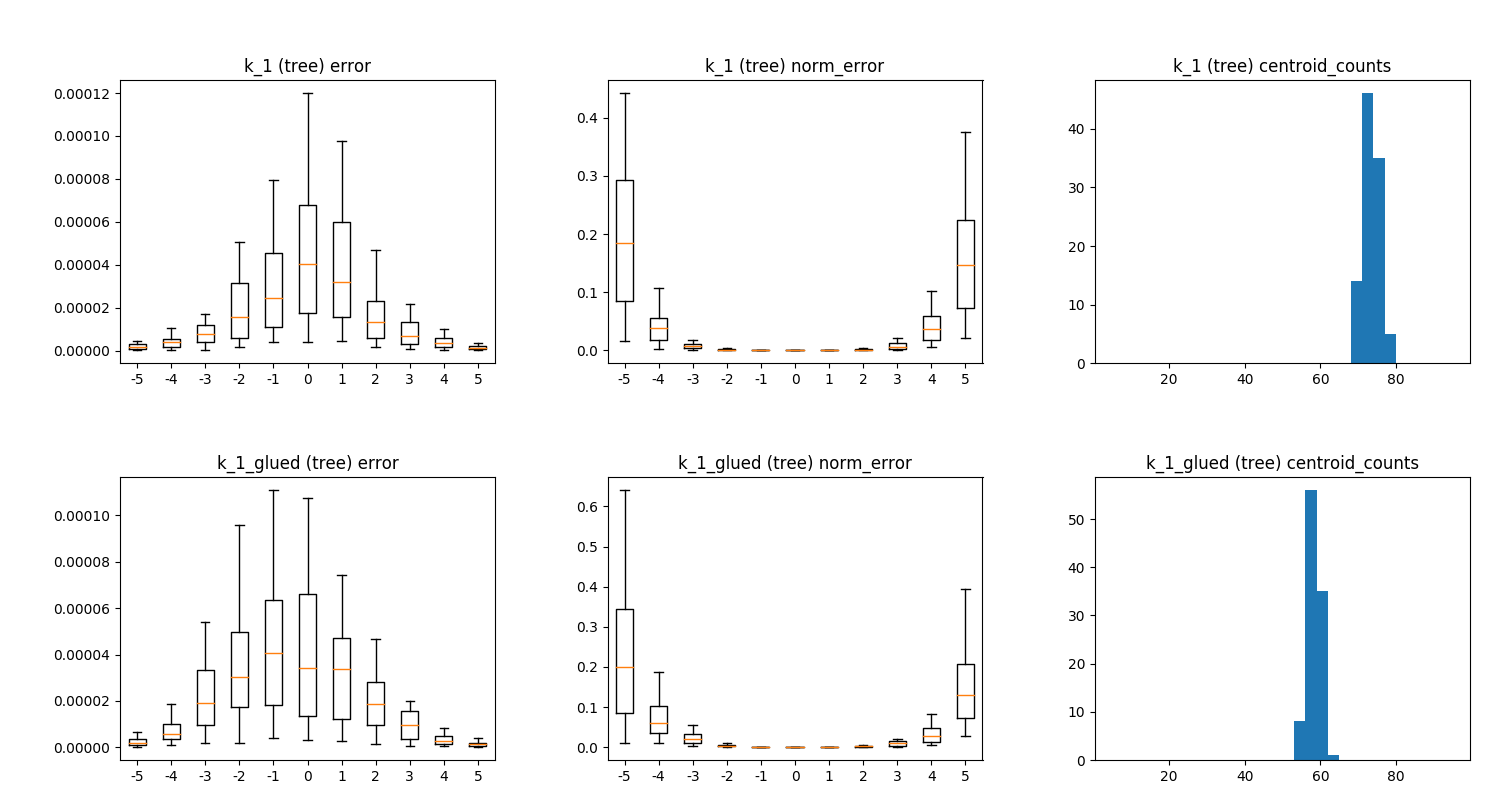}
\end{figure}

\begin{figure}[H]
   \caption{Errors and centroid counts for the usual (first row) and glued (second row) variants of the normalized scale function $k_2$. Both use AVLTree.}
  \centering
\includegraphics[width=\textwidth]{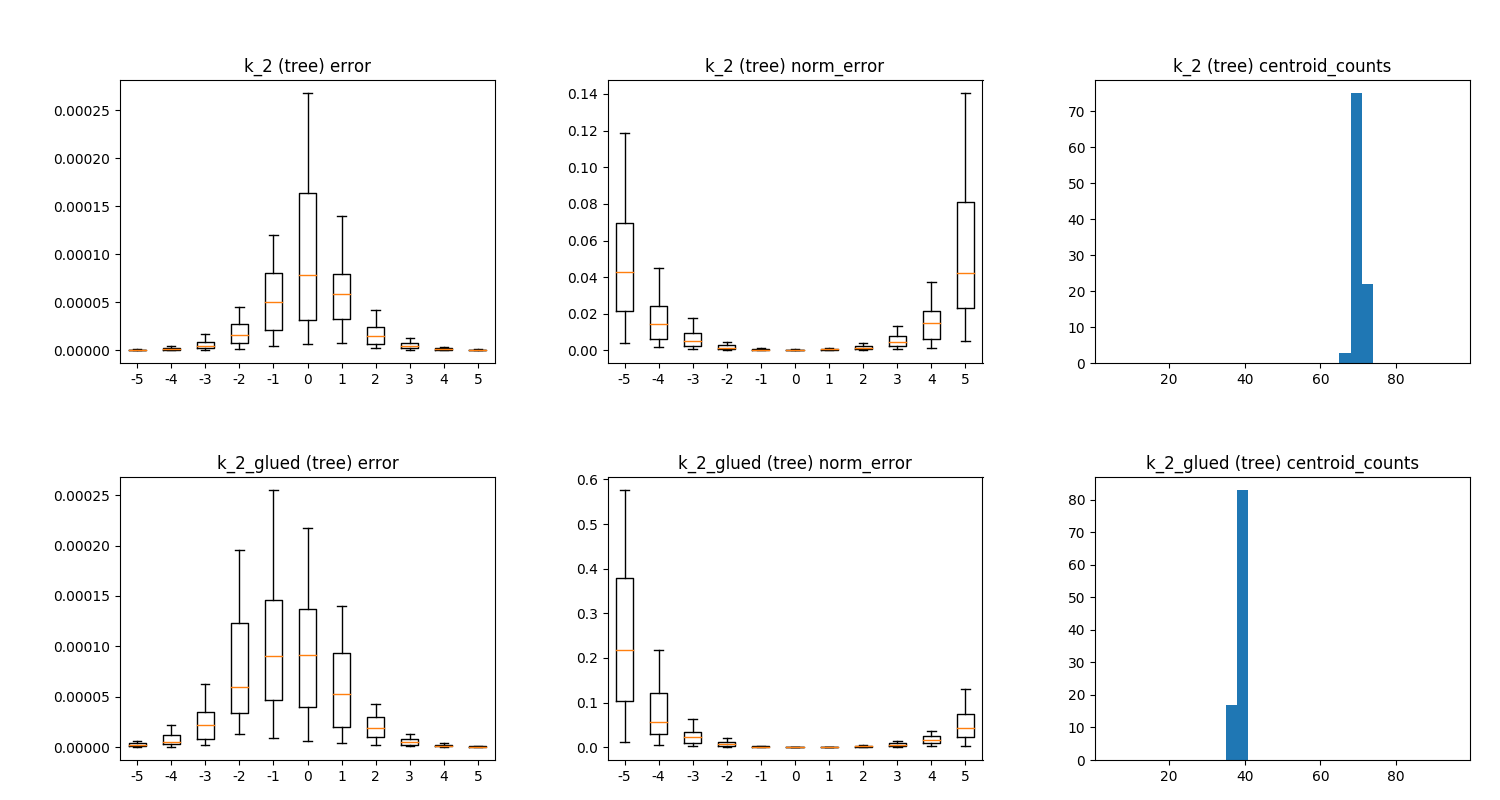}
\end{figure}

\begin{figure}[H]
    \caption{Errors and centroid counts for the usual (first row) and glued (second row) variants of the normalized scale function $k_3$. Both use AVLTree.}
  \centering
\includegraphics[width=\textwidth]{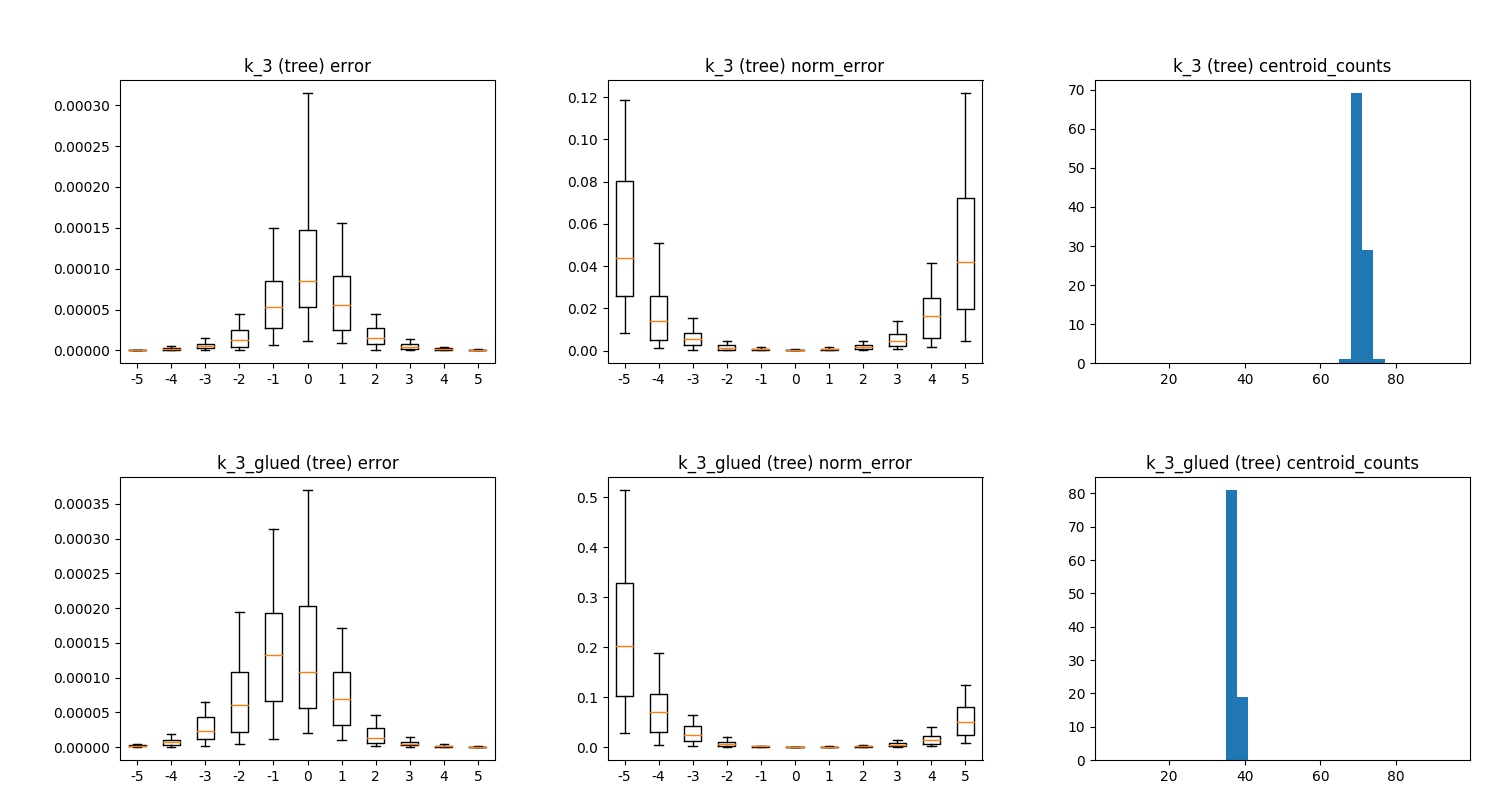}
\end{figure}

\textbf{Discussion.} In all cases the glued variant of $k_i$ has the error profile of $k_i$ for $q \geq {1 \over 2}$ and that of $k_0$ (linear function, uniform cluster sizes) for $q \leq {1 \over 2}$, as expected.
The reduction in number of centroids is more dramatic for $k_2$ and $k_3$ than it is for $k_1$ due to the normalizing term $Z(n)$ appearing in the linear halves of $k_2$ and $k_3$. This reduction describes, to first order, the memory savings of the asymmetric (glued) variant over the usual symmetric one. We have not investigated quantitatively the computational advantage, but roughly speaking, half (when gluing at $p={1 \over 2}$) of the transcendental scale function evaluations are replaced by evaluation of a simple linear function.

\subsection{Merging digest results}

This subsection compares the different scale functions for $t$-digests using the MergingDigest variant in the Java implementation \citep{digestrepo}. We have made two minor changes to the main implementation in  \citep{asymmrepo}. First, we set ``useAlternatingSort" to false, so that we do not alternate between upward and downward merge passes.
Alternating seems to interact poorly with asymmetric scale functions; when set to true, the digests using asymmetric scale functions have too few centroids. 
Second, we have added more padding to the underlying arrays; the amount of fudge required seems to depend on the number of samples processed.

\begin{figure}[H]
  \caption{Errors and centroid counts for the scale function $k_0$ (first row; a baseline) and for the quadratic polynomial scale function $k_{quadratic} = { \delta \over 6 } (q^2 + 2q)$ (second row). For this coefficient choice, the resulting function maps $[0,1]$ to $[0,{ \delta \over 2 }]$, as does $k_0$. Both use MergingDigest.
 For $k_0$, the unusual errors at $q=0.01$ and $q=0.99$ (several times the error observed with the AVLTree implementation) seem to be related to the compression parameter (perhaps via inaccuracy near the boundary between clusters); these ``bumps" move to $q=0.001$ and $q=0.999$ with $\delta=1000$. The asymmetric $k_{quadratic}$ improves the error at $q=0.99$ at the expense of introducing more centroids.
  }
  \centering
\includegraphics[width=\textwidth]{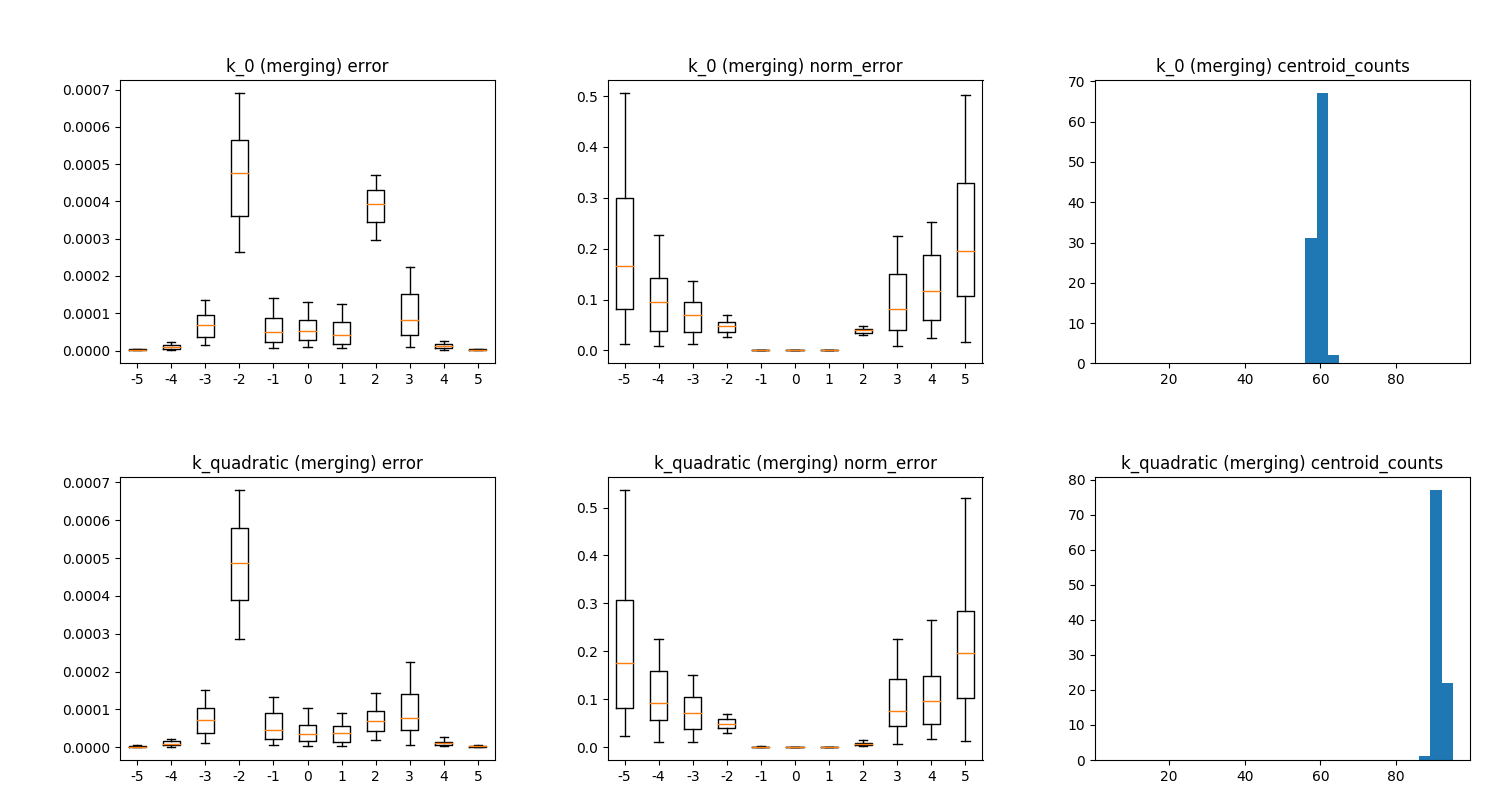}
\end{figure}

\begin{figure}[H]
 \caption{Errors and centroid counts for the usual (first row) and glued (second row) variants of the scale function $k_1$. Both use MergingDigest.
The glued variant of $k_1$ has the error profile of $k_1$ for $q \geq {1 \over 2}$ and that of $k_0$ for $q \leq {1 \over 2}$, including the unusual error at $q=0.01$. The unexpected asymmetry of the errors for $k_1$ disappears when setting ``useAlternatingSort" to true.
    }
  \centering
\includegraphics[width=\textwidth]{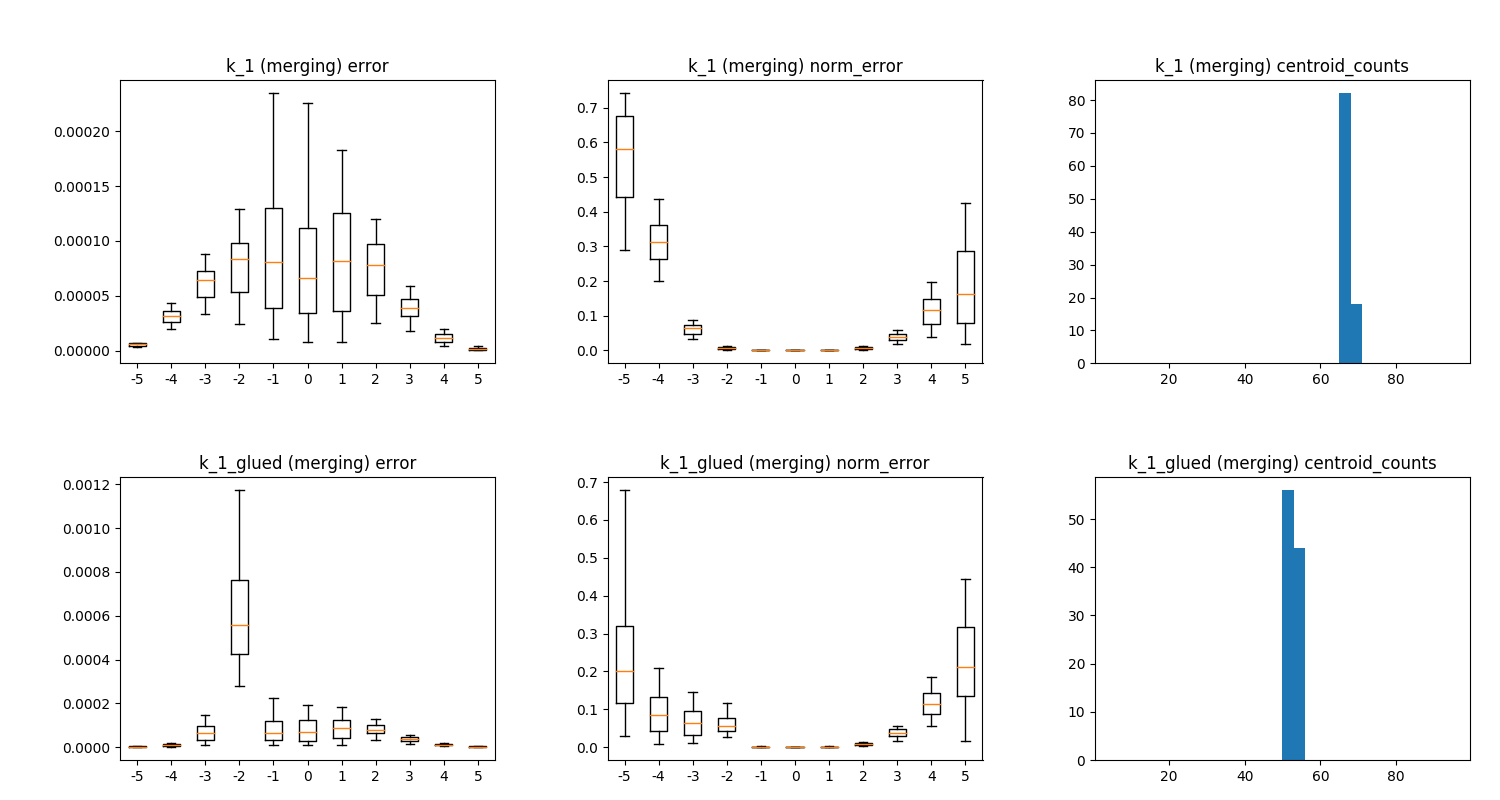}
\end{figure}

\begin{figure}[H]
    \caption{Errors and centroid counts for the usual (first row) and glued (second row) variants of the normalized scale function $k_2$. Both use MergingDigest.
The glued variant of $k_2$ has the error profile of $k_2$ for $q \geq {1 \over 2}$ and that of $k_0$ for $q \leq {1 \over 2}$, except some of the unusual error for $k_0$ at $q=0.01$ seems to have shifted to $q=0.1$ (perhaps due to more effective compression for $q \leq {1 \over 2}$).
 }
  \centering
\includegraphics[width=\textwidth]{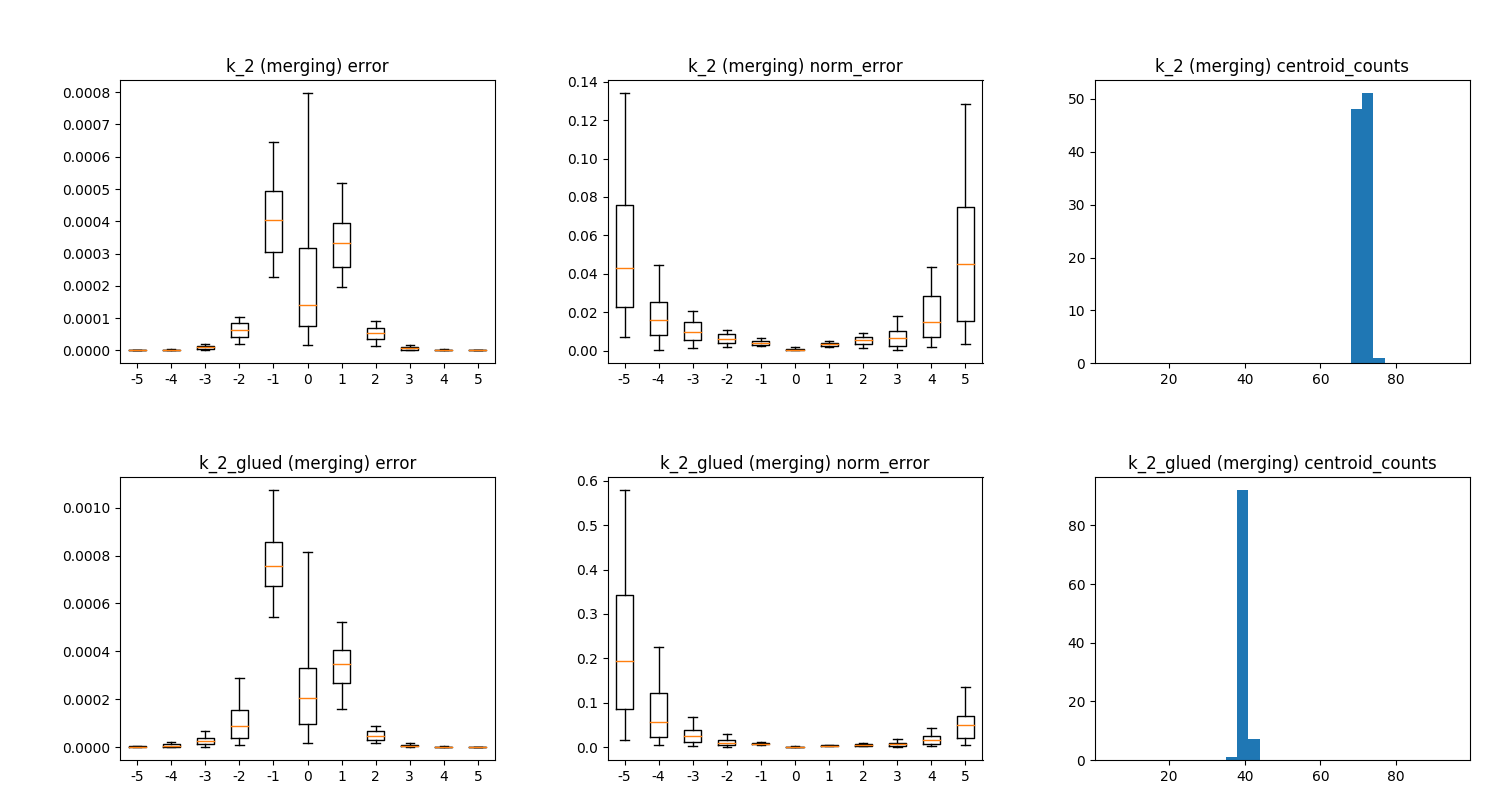}
\end{figure}

\begin{figure}[H]
\caption{Errors and centroid counts for the usual (first row) and glued (second row) variants of the normalized scale function $k_3$. Both use MergingDigest.
The glued variant of $k_3$ has the error profile of $k_3$ for $q \geq {1 \over 2}$ and that of $k_0$ for $q \leq {1 \over 2}$, except some of the unusual error for $k_0$ at $q=0.01$ seems to have shifted to $q=0.1$ (perhaps due to more effective compression for $q \leq {1 \over 2}$).}
  \centering
\includegraphics[width=\textwidth]{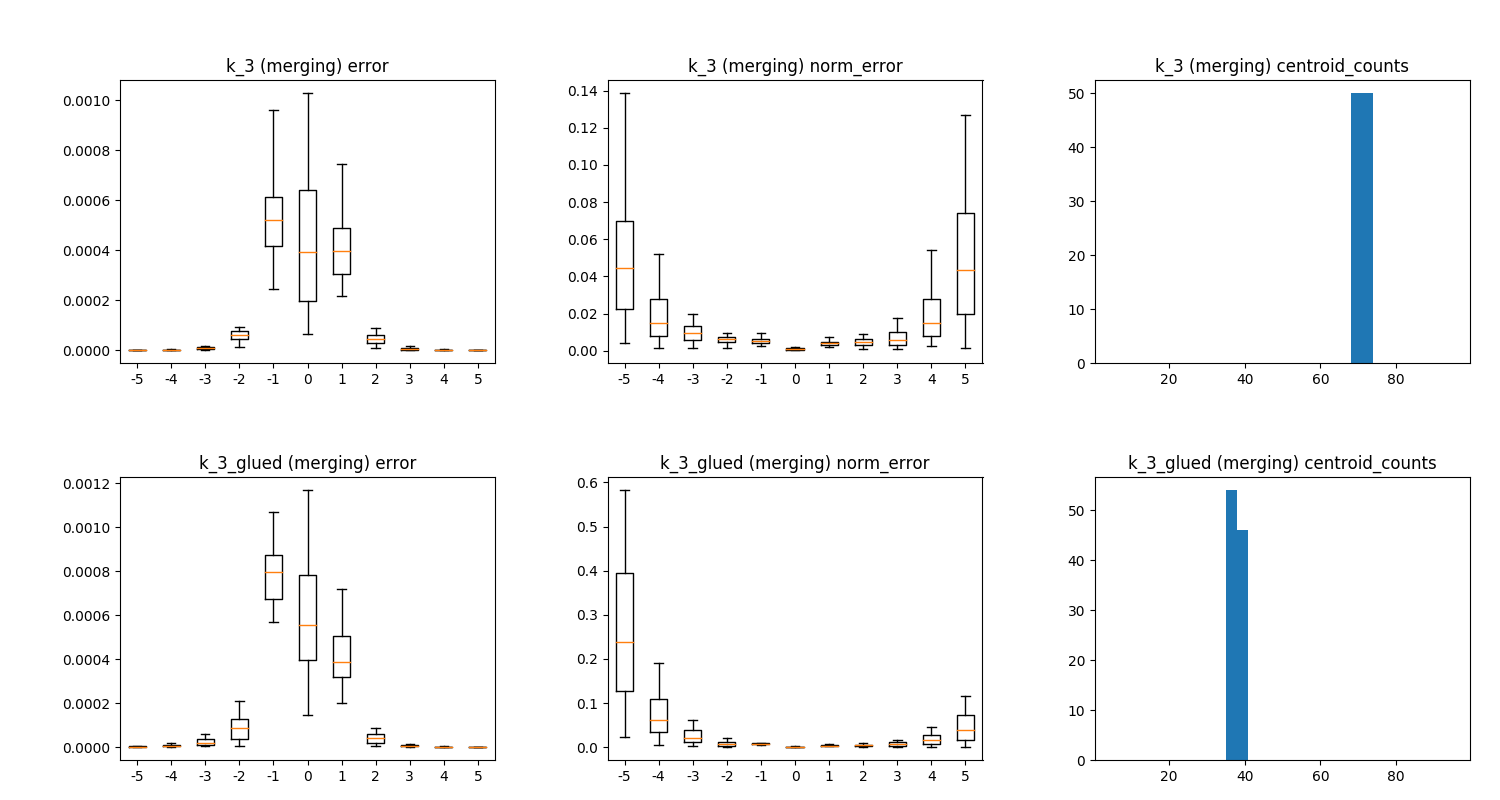}
\end{figure}

\textbf{Acknowledgments.} It is a pleasure to thank engineering and management at SignalFx for their encouragement and support during the preparation of this paper, and to thank Matthew Pound for many interesting discussions on this topic.

\bibliography{tdigest_bib}{}
\bibliographystyle{plainnat}

\end{document}